\definecolor{blue}{rgb}{0.1,0.2,0.5}
\definecolor{brown}{rgb}{0.6,0.6,0.2}
\theoremstyle{plain}
\newtheorem{theorem}{Theorem}
\newcommand{\newtheoremwithcrefformat}[2]{%
  \newtheorem{#1}[theorem]{#2}%
  \crefformat{#1}{##2\MakeUppercase#1~##1##3}%
  \Crefformat{#1}{##2\MakeUppercase#1~##1##3}%
}
\newcommand{\newseptheoremwithcrefformat}[2]{%
  \newtheorem{#1}{#2}%
  \crefformat{#1}{##2\MakeUppercase#1~##1##3}%
  \Crefformat{#1}{##2\MakeUppercase#1~##1##3}%
}
\theoremstyle{nonumberplain}
\newtheorem{proof}{Proof}
\newtheorem{clproof}{Proof}
\def\cqedsymbol{\ifmmode$\lrcorner$\else{\unskip\nobreak\hfil
\penalty50\hskip1em\null\nobreak\hfil$\lrcorner$
\parfillskip=0pt\finalhyphendemerits=0\endgraf}\fi}
\newcommand{\Oh}{\mathcal{O}}
\newcommand{\Rr}{\mathcal{R}}
\newcommand{\eps}{\varepsilon}
\newcommand{\R}{\mathbb{R}}
\renewcommand{\phi}{\varphi}
\renewcommand{\epsilon}{\varepsilon}
\newcommand{\dist}{\mathrm{dist}}
\newcommand{\VorPrt}{\mathsf{Cell}}
\newcommand{\VorTree}{T}
\newcommand{\VorDiag}{\mathsf{Vor}}
\newcommand{\Spoke}{\mathsf{Spoke}}
\newcommand{\Diam}{\mathsf{Diam}}
\newcommand{\prt}{\partial}
\newcommand{\Ii}{\mathcal{I}}
\newcommand{\openfac}{D}
\newcommand{\fac}{F}
\newcommand{\clients}{C}
\newcommand{\cost}{\mathsf{cost}}
\newcommand{\conncost}{\mathsf{conn}}
\newcommand{\opencost}{\mathsf{open}}
\newcommand{\wei}{\omega}
\newcommand{\supp}{\mathrm{supp}}
\newcommand{\lvl}{\mathsf{level}}
\newcommand{\apxfac}{{\widetilde{\openfac}}}
\newcommand{\optfac}{\openfac^\star}
\newcommand{\inst}{\mathcal{I}}
\newcommand{\maxlvl}{L}
\newcommand{\port}[1]{\mathfrak{p}\langle #1\rangle}
\newcommand{\portal}{\mathfrak{p}}
\renewcommand{\leq}{\leqslant}
\renewcommand{\geq}{\geqslant}
\begin{document}

\title{Efficient approximation schemes for uniform-cost clustering problems in planar graphs\thanks{This work is 
a part of projects CUTACOMBS (Ma. Pilipczuk) and TOTAL (Mi. Pilipczuk) that have received funding from the European Research Council (ERC) 
under the European Union's Horizon 2020 research and innovation programme (grant agreements No.~714704 and No.~677651, respectively).}}

\author{
Vincent Cohen-Addad\thanks{
 Sorbonne Universit\'e, CNRS, Laboratoire d'informatique de Paris 6, LIP6, F-75252 Paris, France \texttt{vincent.cohen.addad@ens-lyon.org}.
}
\and
Marcin~Pilipczuk\thanks{
  Institute of Informatics, University of Warsaw, Poland, \texttt{marcin.pilipczuk@mimuw.edu.pl}.
}
\and
Micha\l{}~Pilipczuk\thanks{
  Institute of Informatics, University of Warsaw, Poland, \texttt{michal.pilipczuk@mimuw.edu.pl}.
}
}

\begin{titlepage}
\def\thepage{}
\thispagestyle{empty}
\maketitle

\begin{textblock}{20}(0, 12.5)
\includegraphics[width=40px]{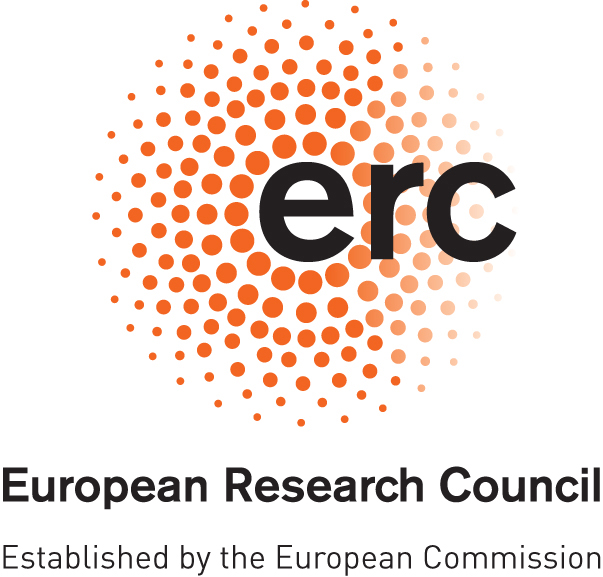}%
\end{textblock}
\begin{textblock}{20}(-0.25, 12.9)
\includegraphics[width=60px]{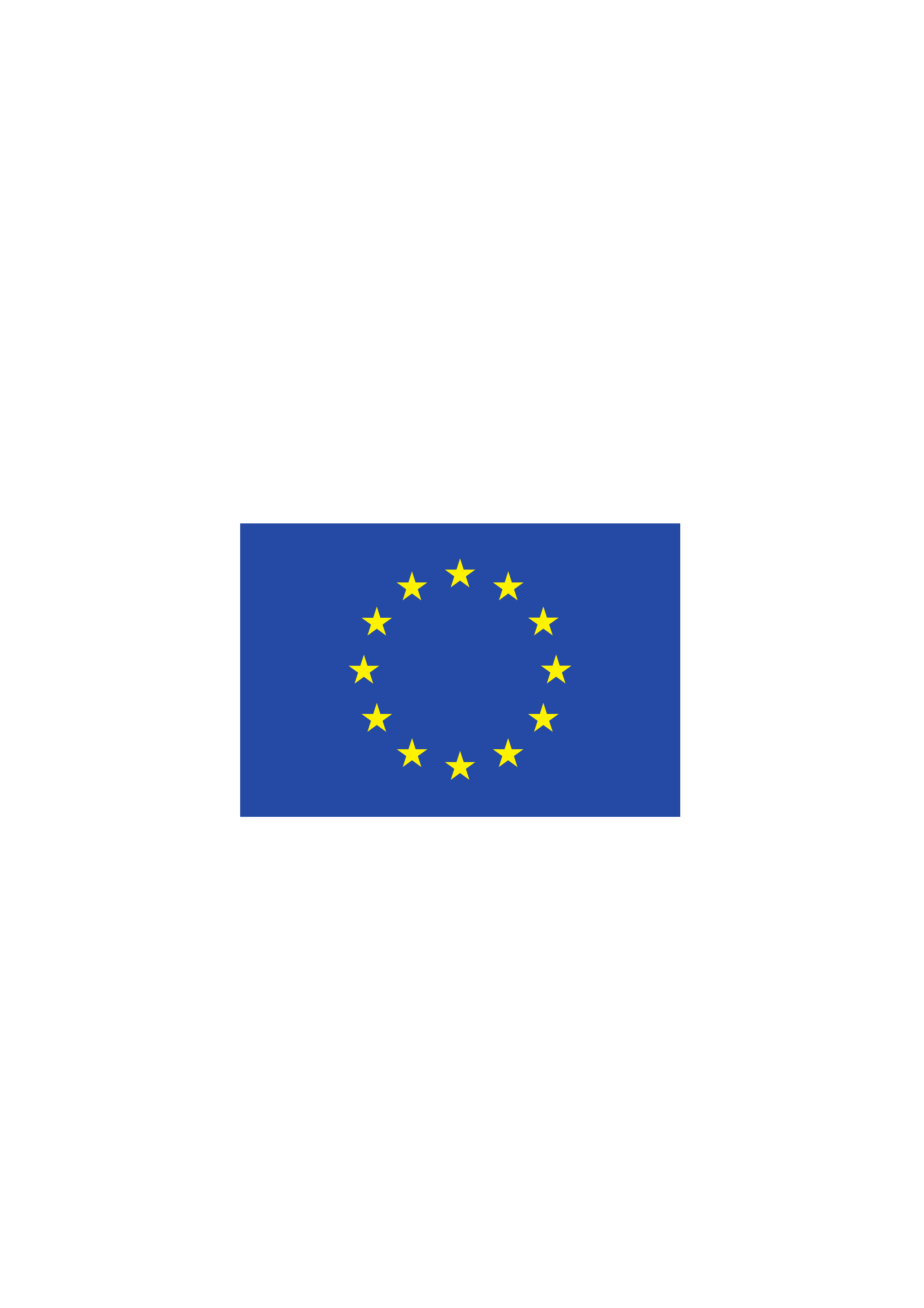}%
\end{textblock}

\begin{abstract}
We consider the {\sc{$k$-Median}} problem on planar graphs: 
given an edge-weighted planar graph $G$, a set of clients $\clients\subseteq V(G)$, a set of facilities $\fac\subseteq V(G)$, and an integer parameter $k$,
the task is to find a set of at most $k$ facilities whose opening minimizes the total connection cost of clients, 
where each client contributes to the cost with the distance to the closest open facility.
We give two new approximation schemes for this problem:
\begin{itemize}
\item {\em{FPT Approximation Scheme}}: for any $\eps>0$, in time $2^{\Oh(k\eps^{-3}\log (k\eps^{-1}))}\cdot n^{\Oh(1)}$ we can compute a solution that 
      has connection cost at most $(1+\eps)$ times the optimum, with high probability.
\item {\em{Efficient Bicriteria Approximation Scheme}}: for any $\eps>0$, in time $2^{\Oh(\eps^{-5}\log (\varepsilon^{-1}))}\cdot n^{\Oh(1)}$ we can compute a set of at most $(1+\eps)k$ facilities 
      whose opening yields connection cost at most $(1+\eps)$ times the optimum connection cost for opening at most $k$ facilities, with high probability.
\end{itemize}
As a direct corollary of the second result we obtain an EPTAS for {\sc{Uniform Facility Location}} on planar graphs, with same running time.

Our main technical tool is a new construction of a ``coreset for facilities'' for {\sc{$k$-Median}} in planar graphs: 
we show that in polynomial time one can compute a subset of facilities $\fac_0\subseteq \fac$ of size $k\cdot (\log n/\eps)^{\Oh(\eps^{-3})}$ with a guarantee that there is a $(1+\eps)$-approximate
solution contained in $\fac_0$.
\end{abstract}

\end{titlepage}




\section{Introduction}
We study approximation schemes for classic clustering objectives, formalized as follows. Given an edge-weighted graph $G$ together
with a set $\clients$ of vertices called \emph{clients}, a set $\fac$ of vertices called \emph{candidate facilities}, and an \emph{opening cost} $\opencost \in \R_{\geq 0}$,
the {\sc{Uniform Facility Location}} problem asks for a subset of facilities (also called centers) $\openfac \subseteq \fac$ that minimizes the cost defined as
$|\openfac| \cdot \opencost + \sum_{c \in \clients} \min_{f \in \openfac} \dist(c,f)$. 
In the \textsc{Non-uniform Facility Location} variant, the opening costs may vary between facilities.

We also consider the related \textsc{$k$-Median} problem, where the tuple $(G, \clients, \fac)$ comes with a hard
budget $k$ for the number of open facilities (as opposed to the opening cost $\opencost$).
That is, the problem asks
for a set $\openfac \subseteq \fac$ of size at most $k$ that minimizes the connection cost $\sum_{c \in \clients} \dist(c, \openfac)$.
Note that {\sc{Uniform Facility Location}} can be reduced to \textsc{$k$-Median} by guessing the number of open facilities
in an optimal solution.

{\sc{Facility Location}} and {\sc{$k$-Median}} model in an abstract way various clustering objectives appearing in applications. 
Therefore, designing approximation algorithms for them and their variants is a vibrant topic in the field of approximation algorithms. 
For {\sc{Non-uniform Facility Location}}, a long line of work~\cite{AGKMMP04,Hochbaum82,STA1997,JaV01} culminated with the $1.488$-approximation algorithm by Li~\cite{Li13}.
On the other hand, Guha and Khuller~\cite{GuK99} showed that the problem cannot be approximated in polynomial time within factor better than $1.463$ unless $\mathsf{NP}\subseteq \mathsf{DTIME}[n^{\Oh(\log \log n)}]$, 
which gives almost tight bounds on the best approximation factor achievable in polynomial time.
For {\sc{$k$-Median}}, the best known approximation ratio achievable in polynomial time is $2.67$ due to Byrka et al.~\cite{ByrkaPRST17}, while the lower bound of $1.463$ due to
Guha and Khuller~\cite{GuK99} holds here as well.

Given the approximation hardness status presented above, it is natural to consider restricted metrics.
In this work we consider {\em{planar metrics}}: we assume that the underlying edge-weighted graph $G$ is planar.

It was a long-standing open problem whether {\sc{Facility Location}} admits a polynomial-time approximation scheme (PTAS) in planar metrics.
For the uniform case, this question has been resolved in affirmative by Cohen-Addad et al.~\cite{local-search} in an elegant way:
they showed that local search of radius $\Oh(1/\eps^2)$ actually yields a $(1+\eps)$-approximation, giving a PTAS with running time
$n^{\Oh(1/\eps^2)}$.
This approach also gives a PTAS for {\sc{$k$-Median}} with a similar running time, and works even in metrics induced by graphs from any fixed proper minor-closed class.

Very recently, Cohen-Addad et al.~\cite{CohenPP19} also gave a PTAS for {\sc{Non-uniform Facility Location}} in planar metrics using a different approach.
Roughly, the idea is to first apply Baker layering scheme to reduce the problem to the case when in all clusters (sets of clients connected to the same facility) in the solution, 
all clients are within distance between $1$ and~$r$ from the center, for some constant $r$ depending only on $\eps$.
This case is then resolved by another application of Baker layering scheme, followed by a dynamic programming on a hierarchichal decomposition of the graph using shortest paths as balanced separators.

Both the schemes of~\cite{local-search} and of~\cite{CohenPP19} are PTASes: they run in time $n^{g(\eps)}$ for some function~$g$.
It is therefore natural to ask for an {\em{efficient PTAS}} ({\em{EPTAS}}): an approximation scheme with running time $f(\eps)\cdot n^{\Oh(1)}$ for some function $f$.
Recently, such an EPTAS was given by Cohen-Addad~\cite{Cohen-Addad:2018} for {\sc{$k$-Means}} in low-dimensional Euclidean spaces; 
this is a variant of {\sc{$k$-Median}} where every client contributes to the connection cost with the {\em{square}} of its distance from the closest open facility.
Here, the idea is to apply local search as in~\cite{local-search}, but to use the properties of the metric to explore the local neighborhood faster.
Unfortunately, this technique mainly relies on the Euclidean structure (or on the bounded doubling dimension of the input) and seems hard to lift to the general planar case.
Also the techniques of~\cite{CohenPP19} are far from yielding an EPTAS: essentially, one needs to use a logarithmic number of portals at every step of the final dynamic programming in order to tame
the accumulation of error through $\log n$ levels of the decomposition.

The goal of this work is to circumvent these difficulties and give an EPTAS for {\sc{Uniform Facility Location}} in planar metrics.

\subparagraph*{Our results.}
Our main technical contribution is the following theorem.
In essence, it states that when solving {\sc{$k$-Median}} on a planar graph one can restrict the facility set to a subset of size $k\cdot (\varepsilon^{-1} \log n)^{\Oh(\varepsilon^{-3})}$,
at the cost of losing a multiplicative factor of $(1+\eps)$ on the optimum connection cost.
This can be seen as the planar version of the classic result by Matou\v{s}ek~\cite{Mat00} who
showed that for Euclidean metrics of dimension $d$, it is possible to reduce the number
of candidate centers to $\text{poly}(k) \eps^{-O(d)}$
at the cost of losing a multiplicative factor of $(1+\eps)$ on the optimum connection cost
(through the use
of coresets as well). For general metrics, obtaining such a result seems challenging, since this would imply
a $(1+\eps)$-approximation algorithm with running time $f(k,\eps) n^{O(1)}$, which would contradict
Gap-ETH~\cite{absFPTkmed}.

From now on, by {\em{with constant probability}} we mean with probability at least~$1/2$; this can be boosted by independent repetition.

\begin{theorem}\label{thm:fac-coreset}
Given a {\sc{$k$-Median}} instance $(G,\fac,\clients,k)$, where $G$ is a planar graph,
and an accuracy parameter $\varepsilon > 0$, 
one can in randomized polynomial time compute a set $\fac_0 \subseteq \fac$ of size
$k \cdot (\varepsilon^{-1} \log n)^{\Oh(\varepsilon^{-3})}$ satisfying the following condition with constant probability: there exists a set
$\openfac_0 \subseteq \fac_0$ of size at most $k$ such that for every 
set $\openfac \subseteq \fac$ of size at most $k$ it holds that
$\conncost(\openfac_0, \clients) \leq (1+\varepsilon)\cdot \conncost(\openfac, \clients)$.
\end{theorem}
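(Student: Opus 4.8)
The plan is to build $\fac_0$ in two nested stages, mirroring the two applications of Baker layering in~\cite{CohenPP19}, but maintaining at every stage that only a \emph{polynomial-in-}$n$ (times $k$) number of candidate facilities survives rather than the logarithmic-portal blow-up that kills the running time there. First I would reduce, via a standard Baker-style layering argument on distance classes (combined with random shifting of the layer boundaries), to the \emph{bounded spread} situation: it suffices to handle instances in which, in some fixed near-optimal solution $\optfac$, every client lies at distance between $1$ and $r$ from its serving facility, where $r = (\eps^{-1}\log n)^{\Oh(\eps^{-1})}$ or so — the key point being that $\log r$ is only $\Oh(\eps^{-1}\log(\eps^{-1}\log n))$, which is what ultimately controls the exponent. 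Losing the distances outside $[1,r]$ costs a $(1+\eps)$ factor in connection cost by the usual argument that each client's contribution is geometrically dominated across scales after random shifting; this step is routine but must be done carefully so that the \emph{same} subset $\fac_0$ works simultaneously for all competitor solutions $\openfac$, not just the fixed $\optfac$ — which is why the theorem statement quantifies over all $\openfac$ after $\openfac_0$ is chosen.

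Second, and this is the technical heart, I would recursively decompose the (now bounded-spread) planar graph using shortest paths as balanced separators, exactly as in the Klein--Plotkin--Rao / Thorup hierarchical scheme, obtaining a decomposition tree of depth $\Oh(\log n)$. On each separator shortest path $P$ appearing in the decomposition, I would place a set of \emph{portals}: net points along $P$ at multiplicative-geometric spacing in the range of relevant distances $[1,r]$, so $\Oh(\eps^{-1}\log r)$ portals per ``scale window'' and $\Oh(\eps^{-1}\log r \cdot \log n)$ portals on $P$ in total. The coreset $\fac_0$ is then defined as: for every separator path $P$, every portal $p$ on $P$, and every ``radius class'' $[2^i,2^{i+1}]$ with $2^i \le r$, include the facility of $\fac$ closest to $p$ within that radius class (and, to be safe, a bounded net of such near-optimal facilities rather than a single one). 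Counting: $\Oh(n)$ separator paths, $\Oh(\eps^{-1}\log r\cdot\log n)$ portals each, $\Oh(\log r)$ radius classes, giving $\fac_0$ of size $n^{\Oh(1)}\cdot(\eps^{-1}\log n)^{\Oh(\eps^{-3})}$ — wait, we need the stated $k\cdot(\eps^{-1}\log n)^{\Oh(\eps^{-3})}$, so in fact the portal placement must be localized: only $\Oh(k)$ separator paths are ``charged'' per level because in the near-optimal solution only $k$ facilities are open and each contributes to boundedly many separators crossing near it. I would make this precise by a charging argument that each relevant (path, portal, radius-class) triple is within distance $\Oh(\eps^{-1}r)$ of some facility of $\optfac$, and each facility of $\optfac$ is responsible for $(\eps^{-1}\log n)^{\Oh(\eps^{-2})}$ triples across all $\Oh(\log n)$ levels.

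The correctness argument, given the construction, runs as follows. Take the fixed solution $\optfac$ from stage one. For each facility $f\in\optfac$ and each client $c$ served by $f$, the shortest $c$–$f$ path is cut by some separator path $P$ at some vertex $x$; reroute $c$ to the portal $p$ on $P$ nearest to $x$ and then declare that $f$ should be replaced by the facility $f'\in\fac_0$ we stored for $(P,p,\text{radius class of }\dist(p,f))$. The detour $c\to x\to p\to f'$ has length at most $\dist(c,f) + 2\dist(x,p) + (\text{error in the net})$, and the geometric portal spacing makes $\dist(x,p)\le \eps\cdot\dist(c,f)$ whenever $\dist(c,f)$ is in the portal's active window — which it is, by bounded spread. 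Defining $\openfac_0$ to be the set of these replacement facilities $f'$ (at most $k$ of them, one per $f\in\optfac$, possibly after merging) gives $\conncost(\openfac_0,\clients)\le(1+\Oh(\eps))\conncost(\optfac,\clients)$, and since $\optfac$ was within $(1+\eps)$ of the true optimum over \emph{all} size-$k$ subsets of $\fac$, rescaling $\eps$ yields the theorem. I expect the main obstacle to be the bookkeeping in stage two that simultaneously (a) keeps $|\fac_0|$ linear in $k$ rather than $n$, which forces the portal sets to be adaptively placed near the (unknown) near-optimal facilities and handled by enumerating over a polynomial number of guesses or by a cleverer global charging, and (b) controls the accumulation of the $(1+\eps)$ portal-rounding error across $\Oh(\log n)$ decomposition levels — the standard fix being to give each level an error budget of $\eps/\log n$, which would reintroduce a $\log n$ factor in the portal count unless, as here, the bounded spread $r$ already caps the number of ``active'' levels for any single client at $\Oh(\log r)=\Oh(\eps^{-1}\log(\eps^{-1}\log n))$, decoupling the per-client error from the global depth.
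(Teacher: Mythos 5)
There is a genuine gap, and you partially flag it yourself: your construction has no way to achieve the size bound $k \cdot (\eps^{-1}\log n)^{\Oh(\eps^{-3})}$ without knowing the near-optimal solution. Placing portals on all separator paths of a hierarchical shortest-path decomposition and keeping one facility per (path, portal, radius class) gives a set whose size is governed by the number of separators, i.e.\ polynomial in $n$; your proposed fix --- charge triples to the $k$ facilities of $\optfac$, or ``adaptively place portals near the (unknown) near-optimal facilities'' --- is not a construction, because $\fac_0$ must be output by a polynomial-time algorithm before $\optfac$ is known, and ``enumerating a polynomial number of guesses'' does not produce a single set of size $k\cdot\mathrm{polylog}(n)$ (a polynomial-size $\fac_0$ would also break the downstream uses: the FPT algorithm brute-forces over $k$-subsets of $\fac_0$, and the bicriteria EPTAS needs $|\fac_0|/r = \Oh(\eps)k$ boundary facilities in an $r$-division with $r$ polylogarithmic). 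The paper resolves exactly this difficulty by localizing with respect to the \emph{clients} rather than the unknown facilities: it first applies the Feldman--Langberg coreset to shrink the client set to $S$ with $|S| = \Oh(k\eps^{-2}\log n)$ weighted clients, then builds the Voronoi diagram of $G$ induced by $S$ (Marx--Pilipczuk framework), whose spokes and diamonds number only $\Oh(|S|)$. Portals are placed geometrically along each spoke, measured from the client at the spoke's origin, and within each diamond facilities are bucketed by a ``profile'' recording approximate distances to $\Oh(\eps^{-3})$ portals around a critical index plus the spoke's client endpoint; keeping one facility per (diamond, profile) pair gives the bound $|S|\cdot(\eps^{-1}\log n)^{\Oh(\eps^{-3})}$ constructively, with no reference to $\optfac$.

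A second, related weakness is your snapping estimate $\dist(x,p)\le\eps\cdot\dist(c,f)$: on a generic separator path there is no reference point whose distance to the crossing vertex $x$ is comparable to $\dist(c,f)$, so geometric spacing alone does not give this, and you are forced into the bounded-spread reduction plus per-level error budgeting across $\Oh(\log n)$ decomposition levels --- precisely the accumulation problem that blocks an EPTAS via the approach of~\cite{CohenPP19}. The client-induced Voronoi diagram sidesteps both issues in one stroke: the crossing vertex $u$ of the client--facility shortest path with the diamond perimeter lies in the Voronoi cell of the client $p$ at the spoke's origin, so $\dist(u,p)\le\dist(u,v)\le\dist(v,w)$, and the geometric spacing along the spoke (from $p$) bounds the snapping error by $\eps$ times the connection distance directly, in a single-level argument with no hierarchical error accumulation. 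Your high-level instinct (portals at geometric resolution, one representative facility per approximate-distance class) matches the paper's, but without the client-coreset-plus-client-Voronoi localization the proof as proposed does not go through.
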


A direct corollary of Theorem~\ref{thm:fac-coreset} is
a fixed-parameter approximation scheme for the \textsc{$k$-Median} problem
in planar graphs.
This continues the line of work on fixed-parameter approximation schemes 
for $k$-median and $k$-means in Euclidean spaces~\cite{DBLP:conf/stoc/VegaKKR03,KSS10}, where
the goal is to design an algorithm running in time
$f(k,\eps)\cdot n^{\Oh(1)}$ for a computable function $f$.

\begin{theorem}\label{thm:kmedian-fpas}
Given a {\sc{$k$-Median}} instance $(G,\clients,\fac,k)$, where $G$ is a planar graph,
and an accuracy parameter $\varepsilon > 0$, one can in randomized time
$2^{\Oh(k\varepsilon^{-3}\log (k\varepsilon^{-1}))} \cdot n^{\Oh(1)}$ compute a solution $\openfac \subseteq \fac$ 
that has connection cost at most $(1+\varepsilon)$ times the minimum possible connection cost
with constant probability.
\end{theorem}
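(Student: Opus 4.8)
The plan is to combine Theorem~\ref{thm:fac-coreset} with an exhaustive search over the small candidate set it produces. First I would run the randomized polynomial‑time algorithm of Theorem~\ref{thm:fac-coreset} on $(G,\clients,\fac,k)$ with accuracy parameter $\varepsilon$, obtaining a set $\fac_0 \subseteq \fac$ with $|\fac_0| = k\cdot(\varepsilon^{-1}\log n)^{\Oh(\varepsilon^{-3})}$ that, with probability at least $1/2$, contains a set $\openfac_0$ of at most $k$ facilities with $\conncost(\openfac_0,\clients) \le (1+\varepsilon)\cdot\conncost(\openfac,\clients)$ for every $\openfac\subseteq\fac$ of size at most~$k$. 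Conditioned on this event it suffices to find the cheapest $k$-subset of $\fac_0$: precompute $\dist(c,f)$ for all $c\in\clients$, $f\in\fac_0$ by $|\clients|$ runs of Dijkstra's algorithm, enumerate every $\openfac\subseteq\fac_0$ with $|\openfac|\le k$, evaluate $\conncost(\openfac,\clients)=\sum_{c\in\clients}\dist(c,\openfac)$, and return the minimizer. Since $\openfac_0$ is among the enumerated sets, the returned solution has connection cost at most $\conncost(\openfac_0,\clients)\le(1+\varepsilon)\OPT$, where $\OPT$ is the optimum over all size-$\le k$ subsets of $\fac$. The success probability is the at-least-$1/2$ probability that $\fac_0$ is good, matching the statement; if higher confidence were wanted, one could take the union of a constant number of independent copies of $\fac_0$, which only blows $|\fac_0|$ up by a constant factor and preserves the guarantee, as the property ``$\fac_0$ contains a good $\openfac_0$'' is monotone under enlarging $\fac_0$.

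The one point requiring care is that the running time, dominated by (number of subsets of $\fac_0$ of size $\le k$)$\cdot\,n^{\Oh(1)} \le |\fac_0|^{\Oh(k)}\cdot n^{\Oh(1)}$, actually fits into $2^{\Oh(k\varepsilon^{-3}\log(k\varepsilon^{-1}))}\cdot n^{\Oh(1)}$. Writing $|\fac_0| = k\cdot(\varepsilon^{-1}\log n)^{c\varepsilon^{-3}}$ we get $|\fac_0|^{k} = 2^{\Oh(k\log k)}\cdot 2^{\Oh(k\varepsilon^{-3}\log(1/\varepsilon))}\cdot(\log n)^{\Oh(k\varepsilon^{-3})}$, so all factors except the last are already of the desired form. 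For the remaining factor $(\log n)^{m}$ with $m=\Oh(k\varepsilon^{-3})$ I would split into two regimes. If $\log n \le (k\varepsilon^{-1})^{A}$ for a suitable absolute constant $A$, then $(\log n)^{m}\le(k\varepsilon^{-1})^{Am} = 2^{\Oh(m\log(k\varepsilon^{-1}))} = 2^{\Oh(k\varepsilon^{-3}\log(k\varepsilon^{-1}))}$. Otherwise $k\varepsilon^{-1}\le(\log n)^{1/A}$, hence $m = \Oh(k\varepsilon^{-3}) = \Oh((\log n)^{4/A})$, and choosing $A$ large enough makes $m\log\log n = o(\log n)$, so $(\log n)^{m}\le n$ for all $n$ above an absolute constant, while for smaller $n$ the quantity $(\log n)^{m}$ is trivially $2^{\Oh(m)}=2^{\Oh(k\varepsilon^{-3})}$. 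In either regime $(\log n)^{m}\le 2^{\Oh(k\varepsilon^{-3}\log(k\varepsilon^{-1}))}\cdot n^{\Oh(1)}$, which gives the claimed bound; the distance precomputation and per-subset evaluation add only $\mathrm{poly}(n)$ time and are absorbed into $n^{\Oh(1)}$.

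I do not expect a genuine obstacle here: all the difficulty of the theorem is packed into Theorem~\ref{thm:fac-coreset}, and the reduction is the routine observation that an FPT-size set of candidate facilities can be searched exhaustively in FPT time, together with the elementary estimate above confirming that the $\log n$ occurring in $|\fac_0|$ is harmless once it is raised only to a power that is linear in $k$ and polynomial in $\varepsilon^{-1}$.
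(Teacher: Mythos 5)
Your proposal is correct and follows essentially the same route as the paper: invoke Theorem~\ref{thm:fac-coreset} and brute-force over all size-$\le k$ subsets of $\fac_0$, with the same kind of two-regime elementary estimate showing that $(\log n)^{\Oh(k\varepsilon^{-3})}\le 2^{\Oh(k\varepsilon^{-3}\log(k\varepsilon^{-1}))}\cdot n^{\Oh(1)}$. No gaps.
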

\begin{proof}
Apply the algorithm of Theorem~\ref{thm:fac-coreset} and let $\fac_0\subseteq \fac$ be the obtained subset of facilities.
Then run a brute-force search through all subsets of $\fac_0$ of size at most $k$ and output one with the smallest connection cost. Thus, the running time is
$$\left(k \cdot (\varepsilon^{-1} \log n)^{\Oh(\varepsilon^{-3})}\right)^k\cdot n^{\Oh(1)}\leq 
2^{\Oh(k\eps^{-3}\log (k\eps^{-1}))}\cdot (\log n)^{\Oh(k\eps^{-3})}\cdot n^{\Oh(1)}\leq 2^{\Oh(k\eps^{-3}\log (k\eps^{-1}))}\cdot n^{\Oh(1)},$$
where the last inequality follows from the bound $(\log n)^d\leq 2^{\Oh(d\log d)}\cdot n^{\Oh(1)}$, which can be proved as follows:
if $n\leq 2^{d^2}$ then $(\log n)^d\leq d^{2d}\leq 2^{\Oh(d\log d)}$, and if $n>2^{d^2}$ then $(\log n)^d\leq 2^{\sqrt{\log n}\cdot \log\log n}\leq n^{\Oh(1)}$.
\end{proof}

Using Theorem~\ref{thm:fac-coreset} we can also give an efficient bicriteria PTAS for \textsc{$k$-Median} in planar graphs. 
This time, the proof is more involved and uses the local search techniques of~\cite{Cohen-Addad:2018}.

\begin{theorem}\label{thm:kmedian-bi}
Given a \textsc{$k$-Median} instance $(G,\clients,\fac,k)$, where $G$ is a planar graph,
and an accuracy parameter $\varepsilon > 0$, one can in randomized time
$2^{\Oh(\varepsilon^{-5} \log(\varepsilon^{-1}))} \cdot n^{\Oh(1)}$ compute a set $\openfac \subseteq \fac$ of size
at most $(1+\varepsilon)k$ such that its connection cost is at most $(1+\varepsilon)$ times the minimum
possible connection cost for solutions of size $k$ with constant probability.
\end{theorem}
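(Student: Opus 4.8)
The plan is to combine the "coreset for facilities" from Theorem~\ref{thm:fac-coreset} with the fast local search technique of Cohen-Addad~\cite{Cohen-Addad:2018}, exploiting the fact that once we are allowed to open $(1+\eps)k$ facilities we can afford to split the instance into many pieces of polylogarithmic size and solve each independently. First I would invoke Theorem~\ref{thm:fac-coreset} to obtain, in randomized polynomial time, a set $\fac_0\subseteq\fac$ of size $k\cdot(\eps^{-1}\log n)^{\Oh(\eps^{-3})}$ such that some size-$k$ subset of $\fac_0$ is a $(1+\eps)$-approximate solution; from now on we forget $\fac$ and work only with $\fac_0$. The point of passing to $\fac_0$ is that its size is near-linear in $k$ up to a $(\log n/\eps)^{\Oh(\eps^{-3})}$ factor, which after an appropriate split will be absorbed into an FPT-in-$\eps$ running time.

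Next I would decompose the instance so that local search can be run on small, independent subinstances. Concretely, I would use a branch/portal decomposition of the planar graph $G$ (e.g.\ via Klein-style shortest-path separators, as already used in~\cite{CohenPP19}) to partition the clients and the candidate facilities of $\fac_0$ into groups, each group inducing a subinstance in which the relevant part of the graph has only $(\log n/\eps)^{\Oh(\eps^{-3})}$ "interface" vertices, while across the separators we lose only a $(1+\eps)$ factor by routing through a bounded number of portals. The crucial bookkeeping is on the facility budget: we must argue that the per-piece budgets can be rounded up so that the total number of opened facilities is at most $(1+\eps)k$, rather than $k$ exactly — this is exactly where the bicriteria relaxation is spent, and it is what lets the decomposition be "lossy in $k$ but cheap." On each resulting subinstance, which now has only polylogarithmically many vertices of interest, I would run the local-search-based EPTAS of~\cite{Cohen-Addad:2018}: local search of radius $\Oh(\eps^{-2})$ yields a $(1+\eps)$-approximation on planar (more generally, minor-closed) metrics by~\cite{local-search}, and the faster exploration of the local neighbourhood from~\cite{Cohen-Addad:2018} makes each step run in time $(\log n)^{\Oh(\eps^{-5})}$ rather than $(\log n)^{\Oh(\eps^{-2})}$ — and then the same $(\log n)^d\leq 2^{\Oh(d\log d)}\cdot n^{\Oh(1)}$ trick used in the proof of Theorem~\ref{thm:kmedian-fpas} converts this into $2^{\Oh(\eps^{-5}\log(\eps^{-1}))}\cdot n^{\Oh(1)}$. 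Finally I would reassemble the per-piece solutions via a dynamic program over the decomposition, tracking the (bounded) number of portals used and the per-branch facility budget; multiplying the $(1+\eps)$ factors lost at each of the three stages (coreset, portalization, local search) and rescaling $\eps$ by a constant gives the claimed $(1+\eps)$ guarantee with the stated budget.

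The main obstacle, and the part requiring genuine care, is controlling the interaction between the facility budget $k$ and the decomposition. Unlike \textsc{Uniform Facility Location}, where opening costs are additive and a split incurs no global constraint, in \textsc{$k$-Median} the budget is a single hard cap shared across all pieces; a naive split would need to guess, for each of the (up to linearly many) pieces, how many of the $k$ facilities it may open, which is infeasible. The bicriteria slack is precisely what resolves this: by allowing a multiplicative $(1+\eps)$ overshoot we can afford to round each piece's allotment up to the nearest integer, and since the decomposition can be arranged to have at most $\eps k$ pieces that actually receive a facility (or, alternatively, $\Oh(1)$ levels of recursion each contributing a bounded overshoot), the rounding errors sum to at most $\eps k$. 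Making this argument precise — choosing the decomposition so that the number of "active" pieces is small, and showing that an optimal size-$k$ solution survives the portalization and rounding with only $(1+\eps)$ cost blowup — is the technical heart of the proof; the local-search and coreset ingredients are then applied essentially as black boxes.
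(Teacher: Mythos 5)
Your high-level plan (facility coreset, then split into polylog-size pieces, solve each piece by local search, reassemble by a DP, and pay for the split with the bicriteria slack) matches the paper's, but the decomposition you propose is exactly the one the paper argues does \emph{not} work, and the two places you flag as "the technical heart" are left unresolved in a way the paper's construction avoids entirely. First, you split $G$ by shortest-path separators and accept a $(1+\eps)$ loss from routing through a bounded number of portals per separator. With a recursive separator decomposition this loss is incurred at each of $\Theta(\log n)$ levels, so to keep the total error at $(1+\eps)$ you need error $\Oh(\eps/\log n)$ per level, i.e.\ $\Oh(\eps^{-1}\log n)$ portals per interface, and the DP over interfaces becomes quasi-polynomial --- this is precisely the obstacle the introduction attributes to the approach of~\cite{CohenPP19}. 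The paper sidesteps portalization completely: it contracts the Voronoi cells of $\fac_0$ to get a planar graph $H$ on vertex set $\fac_0$ (of size $k\cdot(\eps^{-1}\log n)^{\Oh(\eps^{-3})}$ --- this is where the linear-in-$k$ coreset bound is essential), takes a single Frederickson $r$-division of $H$ with $r=(\eps^{-1}|\fac_0|/k)^2$ so that the total boundary has only $\Oh(\eps)k$ facilities, and \emph{opens all boundary facilities as compulsory}. Then any near-optimal solution augmented by the boundaries serves every client inside its own region (any path leaving the region crosses a boundary Voronoi cell, whose center is closer), so the regional subproblems are exactly independent, with zero inter-region error; the entire bicriteria overshoot is just the $\Oh(\eps)k$ boundary facilities.

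Second, your budget bookkeeping is where the argument would actually fail. You propose rounding each piece's facility allotment up and claim the decomposition "can be arranged to have at most $\eps k$ pieces that actually receive a facility," but you give no mechanism for this, and with up to $\Theta(n/r)$ pieces the per-piece round-up can exceed $\eps k$ by far. The paper does not round at all: for every region $R$ and every $\ell\in\{0,\dots,k\}$ it runs the local-search PTAS of~\cite{local-search} (not the Euclidean machinery of~\cite{Cohen-Addad:2018}, which is not needed once each region has only $r$ candidate facilities) with $\prt R$ compulsory and $\ell$ extra facilities, and then a knapsack DP picks $(\ell_R)_R$ with $\sum_R \ell_R\le k$ exactly. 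The costs $\cost(R,\ell)$ are well defined and additive precisely because of the isolation property above; without it (as in your portal-based split) the knapsack values for different pieces are not independent and the assembly DP you sketch would have to track portal assignments as well, reintroducing the error-accumulation/size blowup. So the missing idea is the Voronoi contraction plus $r$-division with all-boundary opening, which simultaneously makes the pieces independent, bounds the overshoot by $\Oh(\eps)k$, and allows exact budget allocation.
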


A direct corollary of Theorem~\ref{thm:kmedian-bi} is an efficient PTAS
for \textsc{Uniform Facility Location} in planar graphs.

\begin{theorem}\label{thm:ufl}
Given a {\sc{Uniform Facility Location}} instance $(G,\clients,\fac,\opencost)$, where $G$
is a planar graph, and an accuracy parameter $\varepsilon > 0$, one can in randomized time
$2^{\Oh(\eps^{-5}\log (\varepsilon^{-1}))}\cdot n^{\Oh(1)}$ compute a solution $\openfac \subseteq \fac$
that has total cost at most $(1+\varepsilon)$ times the optimum cost with constant probability.
\end{theorem}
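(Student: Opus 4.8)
The plan is to reduce \textsc{Uniform Facility Location} to \textsc{$k$-Median} by guessing the number of facilities opened by an optimal solution, and to feed each guess to the efficient bicriteria scheme of \Cref{thm:kmedian-bi}. Concretely: fix some optimal \textsc{UFL} solution $\optfac \subseteq \fac$ and set $k^\star := |\optfac|$, so that $\OPT = k^\star \cdot \opencost + \conncost(\optfac, \clients)$ and $0 \le k^\star \le |\fac| \le n$. For every $k \in \{0, 1, \dots, |\fac|\}$ I would run the algorithm of \Cref{thm:kmedian-bi} on $(G, \clients, \fac, k)$ to obtain a set $\openfac_k \subseteq \fac$ with $|\openfac_k| \le (1+\eps)k$, then evaluate the \textsc{UFL} objective $|\openfac_k| \cdot \opencost + \conncost(\openfac_k, \clients)$ for each $k$ (which takes polynomial time, e.g.\ via a multi-source shortest-path computation from $\openfac_k$), and finally output the set $\openfac_k$ achieving the smallest such value.

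For correctness I would focus on the run with $k = k^\star$, which succeeds with constant probability. Let $\gamma$ be the minimum connection cost over all sets of at most $k^\star$ facilities; success of this run gives $\conncost(\openfac_{k^\star}, \clients) \le (1+\eps)\gamma$ and $|\openfac_{k^\star}| \le (1+\eps)k^\star$. Since $\optfac$ is itself a set of at most $k^\star$ facilities, $\gamma \le \conncost(\optfac, \clients)$, and therefore
\[
|\openfac_{k^\star}| \cdot \opencost + \conncost(\openfac_{k^\star}, \clients) \;\le\; (1+\eps) k^\star \cdot \opencost + (1+\eps)\gamma \;\le\; (1+\eps)\bigl(k^\star \cdot \opencost + \conncost(\optfac, \clients)\bigr) \;=\; (1+\eps)\OPT.
\]
Every $\openfac_k$ is a feasible \textsc{UFL} solution, hence has objective value at least $\OPT$, so the value we output lies in $[\OPT, (1+\eps)\OPT]$ whenever the $k = k^\star$ run succeeds, i.e.\ with constant probability, as required. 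The running time is $|\fac|+1$ invocations of \Cref{thm:kmedian-bi}, each costing $2^{\Oh(\eps^{-5}\log(\eps^{-1}))} \cdot n^{\Oh(1)}$, plus polynomial bookkeeping, which stays within $2^{\Oh(\eps^{-5}\log(\eps^{-1}))} \cdot n^{\Oh(1)}$.

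I do not anticipate a genuine obstacle, since the statement is a direct corollary; the one point to handle with care is that the bicriteria guarantee opens up to $(1+\eps)k^\star$ facilities rather than $k^\star$, and one must check this does not spoil the approximation. It does not, precisely because the opening cost is \emph{uniform}: a multiplicative overshoot in the facility count becomes the identical multiplicative overshoot in the opening term $|\openfac| \cdot \opencost$, which is then absorbed into the overall $(1+\eps)$ factor. (This is exactly where a non-uniform variant would fail, as the extra facilities could be the expensive ones.) A secondary minor point is that it suffices for the single run with $k = k^\star$ to succeed, so no union bound over the $n$ guesses — and hence no additional amplification — is needed to retain constant success probability.
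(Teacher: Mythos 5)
Your proposal is correct and follows essentially the same route as the paper: guess the number $k$ of facilities opened by an optimum solution, invoke \Cref{thm:kmedian-bi} for each guess, and return the cheapest resulting \textsc{UFL} solution, with the uniform opening cost absorbing the $(1+\eps)k$ overshoot in the facility count. The paper states this in a few lines; your write-up simply makes the correctness calculation for $k=k^\star$ explicit.
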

\begin{proof}
Iterate over all possible choices of $k$ being the number of facilities opened by the optimum
solution, and for every $k$ invoke the algorithm of Theorem~\ref{thm:kmedian-bi} 
for the \textsc{$k$-Median} instance $(G,\clients,\fac,k)$. 
From the obtained solutions output one with the smallest cost.
\end{proof}
Note that the approach presented above fails for the non-uniform case,
 where each facility has its own, distinct opening cost.

In this extended abstract we focus on proving the main result, Theorem~\ref{thm:fac-coreset}. The proof of Theorem~\ref{thm:kmedian-bi}, on which Theorem~\ref{thm:ufl} also relies,
is deferred to Section~\ref{sec:kmedian-bi}.

\subparagraph*{Our techniques.}
The first step in the proof of Theorem~\ref{thm:fac-coreset} is to reduce the number of relevant clients 
using the coreset construction of Feldman and Langberg~\cite{coreset}. By applying this technique, 
we may assume that there are at most $k\cdot \Oh(\eps^{-2}\log n)$ clients in the instance, however they are weighted:
every client $c$ is assigned a nonnegative weight $\wei(c)$, and it contributes to the connection cost of any solution with $\wei(c)$ times the distance to the closest open facility in the solution.

We now examine the Voronoi diagram induced in the input graph $G$ by the {\em{clients}}: vertices of $G$ are classified into {\em{cells}} according to the closest client.
This Voronoi diagram has one cell per every client, thus it can be regarded as a planar graph with $|\clients|$ faces, where each face accommodates one cell.
To formally define the Voronoi diagram, and in particular the boundaries between neighboring cells, we use the framework introduced by 
Marx and Pilipczuk~\cite{MarxP15} and its extension used in~\cite{PilipczukLW18}.

Consider now all the {\em{spokes}} in the diagram, where a spoke is the shortest path connecting the center of a cell (i.e. a client) with a branching node of the diagram incident to the cell (which is a face of $G$).
Removing all the spokes and all the branching nodes from the plane divides it into {\em{diamonds}}, where each diamond is delimited by four spokes, called further the {\em{perimeter}} of the diamond.
See Figure~\ref{fig:voronoi} for an example. Since the diagram is a planar graph with $|\clients|$ faces, there are $\Oh(|\clients|)=k\cdot \Oh(\eps^{-2}\log n)$ diamonds altogether. 
Moreover, since no diamond contains a client in its interior, whenever $P$ is a path connecting a client with a facility belonging to some diamond $\Delta$, $P$ has to cross the perimeter of $\Delta$.

Now comes the key and most technical part of the proof.
We very carefully put $\Oh(\eps^{-2}\log n)$ portals on the perimeter of each diamond.
The idea of placement is similar to that of the resolution metric used in the QPTAS for {\sc{Facility Location}}.
Namely, on a spoke $Q$ starting at client $c$ we put portals at distance $1,(1+\eps),(1+\eps)^2,\ldots$ from $c$, so that the further we are on the spoke from $c$, the sparser the portals are.
As a diamond is delimited by four spokes, we may thus use only $\Oh(\eps^{-2}\log n)$ portals per diamond, while the cost of snapping a path crossing $Q$ to the portal closest to the crossing point
can be bounded by $\eps$ times the distance from the crossing point to $c$.

For a facility $f$ in a diamond $\Delta$, we define the \emph{profile} of $f$
as follows. For every spoke $Q$ in the perimeter of $\Delta$, we look at the closest portal $\pi$ from $f$
on $Q$. We record approximate (up to $(1+\eps)$ multiplicative error) distances
from $f$ to $\pi$ and $\Oh(\eps^{-3})$ neighboring portals, as well as the distance to the
client endpoint of the spoke $Q$.
The crux lies in the following fact:
for every two facilities $f,f'$ in $\Delta$ with the same profile,
replacing $f$ with $f'$ increases the connection cost of any client $c$ connected to $f$ only by a multiplicative factor of $(1+\eps)$. 
Hence, for every profile in every diamond it suffices to keep just one facility with that profile.
Since there are $k\cdot \Oh(\eps^{-2}\log n)$ diamonds and $\Oh(\eps^{-1} \log n)^{\Oh(\eps^{-3})}$ possible profiles in each of them, we keep at most $k \cdot (\varepsilon^{-1} \log n)^{\Oh(\varepsilon^{-3})}$
facilities in total. This proves Theorem~\ref{thm:fac-coreset}.

For the proof of Theorem~\ref{thm:kmedian-bi}, we first apply Theorem~\ref{thm:fac-coreset} to reduce the number of facilities to $k \cdot (\varepsilon^{-1} \log n)^{\Oh(\varepsilon^{-3})}$. 
Then we again inspect the Voronoi diagram, but now induced by the {\em{facilities}}. Having contracted every cell to a single vertex, we compute an $r$-division of the obtained planar graph to cover it with 
regions of size $r=(\eps^{-1}\log n)^{\Oh(\varepsilon^{-3})}$ so that only $\Oh(\eps) k$ facilities are on boundaries of the regions. We open all the facilities in all the boundaries --- thus exceeding the quota 
for open facilities by $\Oh(\eps) k$ --- run the PTAS of Cohen-Addad et al.~\cite{local-search} in each region independently, and at the end assemble regional solutions using a knapsack dynamic programming. 
Since within each region there are only polylogarithmically many facilities, each application of the PTAS actually works in time $f(\eps)\cdot n^{\Oh(1)}$.


\section{Preliminaries on Voronoi diagrams and coresets}\label{sec:prelims}

In this section we recall some tools about Voronoi diagrams in planar graphs and coresets that will be used in the proof of Theorem~\ref{thm:fac-coreset}.
We will consider undirected graphs with positive edge lengths embedded in a sphere,
with the standard shortest-paths metric $\dist(u,v)$ for $u,v\in V(G)$.
Contrary to the previous section, the metric is defined on the vertex set of $G$ only, i.e., we do not consider $G$ as a metric space with points in the interiors of edges.
For $X, Y \subseteq V(G)$, we denote $\dist(X,Y) = \min_{x \in X, y \in Y} \dist(x, y)$ and similarly we define $\dist(u, X)$ for $u \in V(G)$ and $X \subseteq V(G)$.

Recall that for a set $\openfac \subseteq V(G)$ of \emph{open facilities}
and a set $\clients \subseteq V(G)$,
    we define the \emph{connection cost} as 
\begin{equation*}\conncost(\openfac, \clients) = \sum_{v \in \clients} \dist(v, \openfac).\end{equation*}
If the input is additionally equipped with opening costs $\opencost \colon \fac \to \mathbb{R}_{\geq 0}$, then the \emph{opening cost} of $\openfac$ is defined as $\sum_{w \in \openfac} \opencost(w)$.

\subsection{Voronoi diagrams in planar graphs}\label{sec:voronoi}

We now recall the construction of Voronoi diagrams and related notions in planar graphs used by Marx and Pilipczuk~\cite{MarxP15}.
The setting is as follows. Suppose $G$ is an $n$-vertex simple graph embedded in a sphere $\Sigma$ whose edges are assigned nonnegative real lengths.
We consider the shortest path metric in $G$: for two vertices $u,v$, their {\em{distance}} $\dist(u,v)$ is equal to the smallest possible total length of a path from $u$ to $v$.
We will assume that $G$ is triangulated (i.e. every face of $G$ is a cycle of length $3$), for this may always be achieved by triangulating the graph using edges of infinite weight.

Further, we assume that shortest paths are unique in $G$ and that finite distances between distinct vertices in $G$ are pairwise different: 
for all vertices $u,v,u',v'$ with $u\neq v$, $u'\neq v'$ and $\{u,v\}\neq \{u',v'\}$, we have $\dist(u,v)\neq \dist(u',v')$ or $\dist(u,v)=\dist(u',v')=+\infty$.
This can be achieved by adding small perturbations to the edge lengths. 
Since we never specify degrees of polynomials in the running time of our algorithms, we may ignore the additional complexity cost incurred by the need of handling the perturbations in arithmetic operations.

\subparagraph*{Voronoi diagrams and their properties.}
Suppose that $S$ is a subset of vertices\footnote{In~\cite{MarxP15} a more general setting is considered 
where objects inducing the diagram are connected subgraphs of $G$ instead of single vertices. We will not need this generality here.} of $G$. 
First, define the {\em{Voronoi partition}}: for a vertex $p\in S$, the {\em{Voronoi cell}} $\VorPrt_S(p)$ is the set of all those vertices $u\in V(G)$ whose distance from $p$ is smaller
than the distance from any other vertex $q\in S$; note that ties do not occur due to the distinctness of distances in $G$.
Note that $\{\VorPrt_S(p)\}_{p\in S}$ is a partition of the vertex set of $G$.
For each $p\in S$, let $\VorTree(p)$ be the union of shortest paths from vertices of $\VorPrt_S(p)$ to $p$; recall here that shortest paths in $G$ are unique.
Note that, due to the distinctness of distances in $G$, $\VorTree(p)$ is a spanning tree of the subgraph of $G$ induced by the cell $\VorPrt_S(p)$.

The diagram $\VorDiag_S$ induced by $G$ is a multigraph constructed as follows.
First, take the dual $G^\star$ of $G$ and remove all edges dual to the edges of all the trees $\VorTree(p)$, for $p\in S$.
Then, exhaustively remove vertices of degree $1$.
Finally, for every maximal $2$-path (i.e. path with internal vertices of degree $2$), say with endpoints $u$ and $v$, we replace this path by the edge $uv$; note that this creates a loop at $u$ in case $u=v$.
The resulting multigraph $\VorDiag_S$ is the Voronoi diagram induced by $S$.
Note that the vertices of $\VorDiag_S$ are faces of $G$; for clarity we shall call them {\em{branching nodes}}.
Furthermore, $\VorDiag_S$ inherits an embedding in $\Sigma$ from the dual $G^\star$, where an edge $uv$ that replaced a maximal $2$-path $P$ is embedded precisely as $P$, 
i.e., as the concatenation of (the embeddings of) the edges comprising $P$. From now on we will assume this embedding of $\VorDiag_S$.

We recall several properties of $\VorDiag_S$, observed in~\cite{MarxP15}
\begin{lemma}[Lemmas~4.4 and~4.5 of~\cite{MarxP15}]
The diagram $\VorDiag_S$ is a connected and $3$-regular multigraph embedded in $\Sigma$, which has exactly $|S|$ faces, $2|S|-4$ branching nodes, and $3|S|-6$ edges.
The faces of $\VorDiag_S$ are in one-to-one correspondence with vertices of $S$: each $p\in S$ corresponds to a face of $\VorDiag_S$ that contains all vertices of $\VorPrt_S(p)$ and 
no other vertex of $G$.
\end{lemma}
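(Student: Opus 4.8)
The plan is to read off everything from planar duality together with Euler's formula, while carefully tracking what the initial edge deletions and the subsequent degree-$1$/degree-$2$ cleanups do to the vertex, edge, and face counts. \emph{Setup.} Since $G$ is a simple triangulation of the sphere on $n$ vertices, it has $3n-6$ edges and $2n-4$ faces, so its dual $G^\star$ has $2n-4$ vertices, $3n-6$ edges, and $n$ faces; moreover, since every face of $G$ is a triangle, every vertex of $G^\star$ has degree exactly $3$. For $F\subseteq E(G)$ write $F^\star=\{e^\star : e\in F\}$, and abbreviate $(E(\VorTree(p)))^\star$ and $\bigcup_p (E(\VorTree(p)))^\star$ accordingly. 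The union $T:=\bigcup_{p\in S}\VorTree(p)$ is a spanning subgraph of $G$ which is a forest with exactly $|S|$ components, one per cell $\VorPrt_S(p)$, hence $|E(T)|=n-|S|$. Let $H_0$ be the graph obtained from $G^\star$ by deleting the dual edges of $E(T)$; it has $2n-4$ vertices and $(3n-6)-(n-|S|)=2n+|S|-6$ edges, and $\VorDiag_S$ arises from $H_0$ by repeatedly deleting degree-$1$ vertices and then suppressing degree-$2$ vertices.

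\emph{Connectivity and $3$-regularity.} I would prove that $H_0$ is connected using the standard duality fact that, for a connected plane graph, the complement in the dual of an acyclic edge set is a connected spanning subgraph: extend $E(T)$ to the edge set of a spanning tree of $G$; its dual complement is a spanning tree of $G^\star$, and it is contained in the edge set of $H_0$, which is therefore connected. Deleting a degree-$1$ vertex together with its edge, and suppressing a degree-$2$ vertex, both preserve connectedness and the embedding in $\Sigma$, so $\VorDiag_S$ is connected. Next, none of the three operations used in the construction — deleting an edge, deleting a degree-$1$ vertex with its incident edge, suppressing a degree-$2$ vertex — ever raises the degree of a surviving vertex; since $G^\star$ is $3$-regular, every branching node of $\VorDiag_S$ has degree at most $3$, while by construction of the cleanup it has degree at least $3$. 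Hence $\VorDiag_S$ is $3$-regular.

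\emph{Counting.} Since $H_0$ is connected, none of the $n-|S|$ edge deletions producing it from $G^\star$ could have disconnected the graph (a later deletion cannot reconnect it), so each of them merged the two necessarily distinct faces on its two sides and dropped the face count by exactly one; hence $H_0$ has $n-(n-|S|)=|S|$ faces. Deleting a degree-$1$ vertex only removes a pendant slit, and suppressing a degree-$2$ vertex only replaces a $2$-path by a single edge; neither changes the number of faces, so $\VorDiag_S$ has $|S|$ faces as well. Writing $b$ and $m$ for its numbers of branching nodes and edges, $3$-regularity gives $2m=3b$ and Euler's formula on the sphere gives $b-m+|S|=2$, whence $b=2|S|-4$ and $m=3|S|-6$.

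\emph{Face--cell correspondence, and the main obstacle.} Throughout the construction no vertex of $G$ ever crosses from one region of the current drawing into another: the initial deletions merge faces exactly along the cell boundaries recorded by $T$, so immediately after that step the face containing a vertex $v$ is the one containing all of $\VorPrt_S(p)$ with $v\in\VorPrt_S(p)$, and the later cleanup steps move no area. Consequently each face of $\VorDiag_S$ contains exactly the vertices of one cell $\VorPrt_S(p)$, in particular its center $p$, which gives the claimed bijection. I expect the most delicate point to be exactly this bookkeeping — arguing rigorously that deleting pendant edges and suppressing degree-$2$ vertices leave intact both the number of faces and the partition of $V(G)$ among them — together with the degenerate small cases $|S|\le 2$, where $\VorDiag_S$ is not genuinely $3$-regular, so the formulas $2|S|-4$ and $3|S|-6$ implicitly require reading the statement with $|S|\ge 3$.
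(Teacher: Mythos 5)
This lemma is not proved in the paper at all: it is imported verbatim as Lemmas~4.4 and~4.5 of Marx and Pilipczuk~\cite{MarxP15}, so there is no in-paper argument for you to match. Your proposal supplies a self-contained proof by planar duality and Euler's formula, and it is essentially sound: the union of the trees $\VorTree(p)$ is a spanning forest of $G$ with exactly $|S|$ components, so the intermediate graph $H_0=G^\star$ minus the corresponding dual edges has $2n-4$ vertices and $2n+|S|-6$ edges; connectivity of $H_0$ via extending the forest to a spanning tree of $G$ and taking the dual complement is the right trick; the observation that connectivity of the final graph forces every deleted dual edge to be a non-bridge at deletion time, so each deletion drops the face count by exactly one, cleanly gives $|S|$ faces; and the degree-$\leq 3$ (operations never raise degrees, $G^\star$ is cubic) versus degree-$\geq 3$ (after the cleanup) argument plus Euler then yields $2|S|-4$ branching nodes and $3|S|-6$ edges. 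This is very much in the spirit of how such statements are established in~\cite{MarxP15}; what your version buys is independence from the citation, at the cost of having to do the embedding bookkeeping yourself.

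Two points deserve tightening. First, for the face--cell correspondence you prove only one direction explicitly: deleting the dual of an edge $uv$ of the forest merges the faces of $G^\star$ containing $u$ and $v$, so each cell $\VorPrt_S(p)$ lies inside a single face of $H_0$ (and the cleanup steps do not move any vertex of $G$ across a face boundary). The claim ``and no other vertex of $G$'' needs the converse, i.e., injectivity of the map from cells to faces; this follows by pigeonhole from what you already have --- every face of $H_0$ is a union of faces of $G^\star$ and hence contains at least one vertex of $G$, there are exactly $|S|$ faces, and the $|S|$ cells each sit inside one face --- but you should say so, since ``merge exactly along the cell boundaries'' as written only gives containment. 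Second, your $3$-regularity step tacitly assumes the graph surviving the degree-$1$ removals is not a bare cycle (otherwise ``maximal $2$-path'' is ill-defined); your own face count disposes of this, since a cycle has two faces while $H_0$ has $|S|$, so this only threatens the degenerate regime $|S|\leq 2$ that you correctly flag as excluded by the stated formulas. With these two remarks incorporated, the argument is complete.
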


\subparagraph*{Spokes and diamonds.} We now introduce further structural elements that can be distinguished in the Voronoi diagram, see Figure~\ref{fig:voronoi} for reference.
The definitions and basic observations presented below are taken from Pilipczuk et al.~\cite{PilipczukLW18}, and were inspired by the Euclidean analogues due to Har-Peled~\cite{Har-Peled14}.

An {\em{incidence}} is a triple $\tau=(p,u,f)$ where $p\in S$, $f$ is a branching node of the diagram $\VorDiag_S$, and $u$ is a vertex of $G$ 
that lies on $f$ (recall that $f$ is a triangular face of $G$) and belongs to $\VorPrt_S(p)$. 
The {\em{spoke}} of the incidence $\tau$, denoted $\Spoke(\tau)$, is the shortest path in $G$ between $p$ and $u$.
Note that all the vertices of $\Spoke(\tau)$ belong to $\VorPrt_S(p)$.

\begin{figure}[t]
  \centering
  \includegraphics[width=0.6\textwidth]{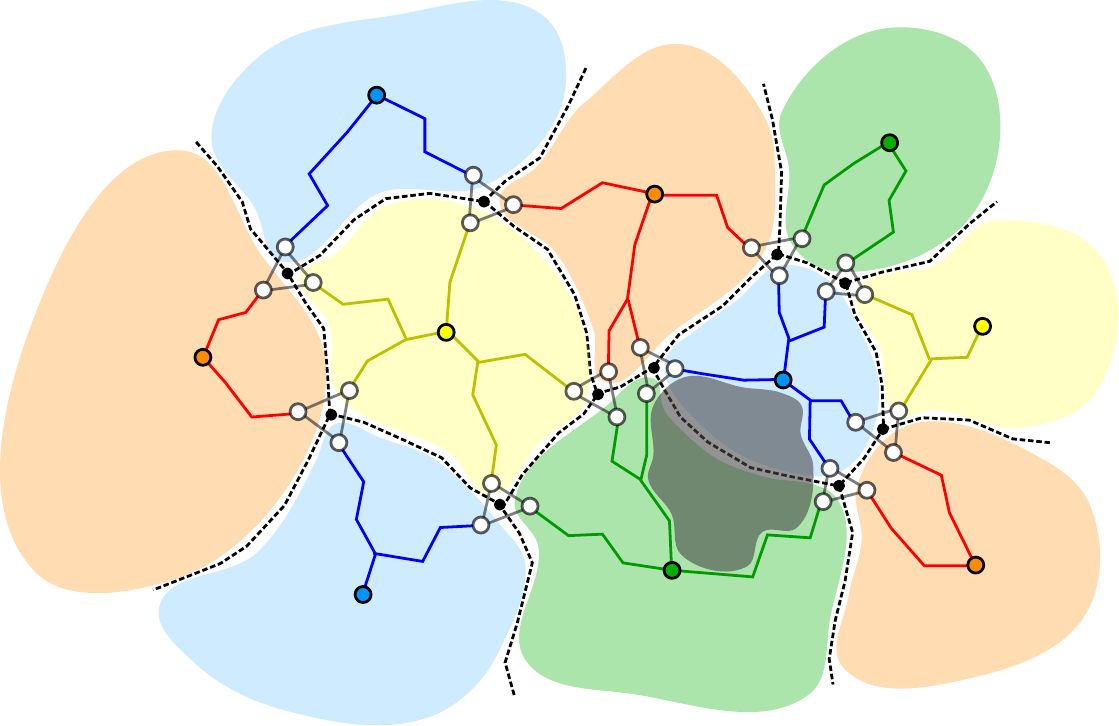}
  \caption{A part of the Voronoi diagram with various features distinguished. Branching nodes of the diagram are grayed triangular faces, edges of the diagram are dashed. 
           Solid paths of respective colors are spokes. (The interior of) one diamond is grayed in order to highlight it.}\label{fig:voronoi}
\end{figure}

Let $e=f_1f_2$ be an edge of the diagram $\VorDiag_S$, where $f_1,f_2$ are branching nodes (possibly $f_1=f_2$ if $e$ is a loop in $\VorDiag_S$).
Further, let $p_1$ and $p_2$ be the vertices from $S$ that correspond to faces of $\VorDiag_S$ incident to $e$ (possibly $p_1=p_2$ if $e$ is a bridge in $\VorDiag_S$).
Suppose for a moment that $f_1\neq f_2$. Then, out of the three edges of $f_1$ (these are edges in $G$) there is exactly one that crosses the edge $e$ of $\VorDiag_S$; say it is the edge
$u_{1,1}u_{1,2}$ where $u_{1,1}\in \VorPrt_S(p_1)$ and $u_{1,2}\in \VorPrt_S(p_2)$. Symmetrically, there is one edge of $f_2$ that crosses $e$, 
say it is $u_{2,1}u_{2,2}$ where $u_{2,1}\in \VorPrt_S(p_1)$ and $u_{2,2}\in \VorPrt_S(p_2)$. In case $f_1=f_2$, the edge $e$ crosses two different edges of $f_1=f_2$ and we define $u_{1,1},u_{1,2},u_{2,1},u_{2,2}$
analogously for these two crossings; note that then, provided $p_1$ corresponds to the face enclosed by the loop $e$, we have $u_{1,1}=u_{1,2}$.
For all $i,j\in \{1,2\}$, consider the incidence $\tau_{i,j}=(p_i,u_{i,j},f_j)$.

Consider removing the following subsets from the sphere $\Sigma$: interiors of faces $f_1,f_2$ and spokes $\Spoke(\tau_{i,j})$ for all $i,j\in \{1,2\}$.
After this removal the sphere breaks into two regions, out of which exactly one, say $R$, intersects (the embedding of) $e$. 
Let the {\em{diamond}} of $e$, denoted $\Diam(e)$, be the subgraph of $G$ consisting of all features (vertices and edges) embedded in $R\cup \bigcup_{i,j\in \{1,2\}} \Spoke(\tau_{i,j})$.
The region $R$ as above is the {\em{interior}} of the diamond $\Diam(e)$.
Note that in particular, the spokes $\Spoke(\tau_{i,j})$ for $i,j\in \{1,2\}$ and the edges $u_{1,1}u_{1,2}$ and $u_{2,1}u_{2,2}$ belong to $\Diam(e)$.
The {\em{perimeter}} of the diamond of $e$ is the closed walk obtained by concatenating spokes $\Spoke(\tau_{i,j})$ for $i,j\in \{1,2\}$ and edges $u_{1,1}u_{1,2},u_{2,1}u_{2,2}$
in the natural order around $\Diam(e)$. The following observation is immediate:

\begin{proposition}\label{prop:diagram}
Consider removing all the spokes (considered as curves on $\Sigma$) and all the branching nodes (considered as interiors of faces on $\Sigma$) of the diagram $\VorDiag_S$ from the sphere $\Sigma$.
Then $\Sigma$ breaks into $3|S|-6$ regions that are in one-to-one correspondence with edges of $\VorDiag_S$: a region corresponding to the edge $e$ is the interior of the diamond $\Diam(e)$.
Consequently, the intersection of diamonds of two different edges of $\VorDiag_S$ is contained in the intersection of their perimeters.
\end{proposition}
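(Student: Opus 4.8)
The plan is to identify the regions left after the removal directly with the interiors of the diamonds. Write $\mathcal{X}\subseteq\Sigma$ for the union of all spokes $\Spoke(\tau)$, over all incidences $\tau$ of $\VorDiag_S$, viewed as closed arcs on $\Sigma$, together with all branching nodes, viewed as closed triangular faces of $G$; this is exactly the set being removed, and it is closed, so the components of $\Sigma\setminus\mathcal{X}$ are open. I will show these components are precisely the interiors $\mathrm{int}\,\Diam(e)$ for $e\in E(\VorDiag_S)$ --- of which there are $3|S|-6$ by the preceding lemma --- with the component meeting a given edge $e$ being $\mathrm{int}\,\Diam(e)$. Granting this, the ``consequently'' clause is immediate: the perimeter of $\Diam(e)$ lies in $\mathcal{X}$, since it consists of four spokes and of two edges of $G$ lying on the two terminal branching-node triangles, and $\Diam(e)$ (as a subset of $\Sigma$) is the union of $\mathrm{int}\,\Diam(e)$ with this perimeter; hence for $e\neq e'$ we have $\mathrm{int}\,\Diam(e)\cap\mathrm{int}\,\Diam(e')=\emptyset$ (distinct components) while each interior is disjoint from $\mathcal{X}$ and thus from both perimeters, so $\Diam(e)\cap\Diam(e')$ collapses to the intersection of the two perimeters.

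The core of the argument is a local analysis inside a single face $\Phi_p$ of $\VorDiag_S$, which is an open disk. Walking around $\partial\Phi_p$ one alternately meets branching nodes $f^{(1)},\dots,f^{(d)}$ and diagram edges $e^{(1)},\dots,e^{(d)}$, with $e^{(i)}$ joining $f^{(i)}$ and $f^{(i+1)}$. For each $f^{(i)}$ there is a vertex $a^{(i)}$ of the triangle $f^{(i)}$ in $\VorPrt_S(p)$, and the associated spoke is the unique shortest path from $p$ to $a^{(i)}$, which stays inside $\VorPrt_S(p)\subseteq\Phi_p$. Here I would invoke that, by uniqueness of shortest paths, the union $\VorTree(p)$ of all shortest paths from $p$ is a tree drawn inside the disk $\Phi_p$ with root $p$; consequently the spokes $\Spoke(p,a^{(i)},f^{(i)})$ are pairwise non-crossing, and since $a^{(i)}$ sits on $\partial\Phi_p$ at $f^{(i)}$ their endpoints occur on $\partial\Phi_p$ in the same cyclic order as the branching nodes. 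Hence these $d$ spokes cut the disk into exactly $d$ pieces, the $i$-th one bounded by $\Spoke(p,a^{(i)},f^{(i)})$, $\Spoke(p,a^{(i+1)},f^{(i+1)})$ and the boundary arc of $\Phi_p$ from $f^{(i)}$ to $f^{(i+1)}$; after also deleting the closed triangles $f^{(j)}$, the $i$-th piece becomes an open region $R^{(i)}_p$, disjoint from $\mathcal{X}$, that hugs the edge $e^{(i)}$ from the $\Phi_p$-side.

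Finally I would glue. Each diagram edge $e$ borders exactly two faces $\Phi_p,\Phi_q$ and so receives one of the above regions from each, say $R_p$ and $R_q$; since the curve $e$ itself lies neither in $\mathcal{X}$ nor in any face, $R_e:=R_p\cup e\cup R_q$ is open and connected, contains (by the local analysis) no spoke and no branching node, and has its boundary inside $\mathcal{X}$, so it is a full component of $\Sigma\setminus\mathcal{X}$. Comparing with the definition of $\Diam(e)$ --- the component, containing $e$, of the sphere with the four bounding spokes and the interiors of the two terminal triangles removed --- yields $R_e=\mathrm{int}\,\Diam(e)$. Since every point of $\Sigma\setminus\mathcal{X}$ lies in some face $\Phi_p$, hence in some $R^{(i)}_p\subseteq R_{e^{(i)}}$, or on some diagram edge $e$, hence in $R_e$, the regions $R_e$ exhaust all components, one per edge. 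The one genuine obstacle is the local subdivision claim of the second paragraph --- that the spokes at $p$ are non-crossing and follow the cyclic order of the incident branching nodes --- together with careful bookkeeping of the degenerate configurations (coinciding spokes, a branching node meeting $\VorPrt_S(p)$ in more than one vertex, loops and bridges of $\VorDiag_S$) in which some regions $R^{(i)}_p$ degenerate; uniqueness of shortest paths and the distinctness-of-distances hypothesis are precisely what keep these under control.
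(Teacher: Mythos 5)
The paper offers no argument for this proposition (it is stated as ``immediate''), and your write-up is a correct elaboration of exactly the intended justification: spokes lie in the embedded shortest-path trees $\VorTree(p)$, hence are non-crossing and subdivide each face of $\VorDiag_S$ in the cyclic order of its incident branching nodes, and the resulting pieces glue in pairs across the diagram edges, with the count coming from the cited lemma. One cosmetic mismatch: you remove \emph{closed} branching-node triangles rather than only their interiors, so your components are the diamond interiors minus the two crossing edges $u_{1,1}u_{1,2}$ and $u_{2,1}u_{2,2}$ (which the paper's ``interior of $\Diam(e)$'' contains), and the curve $e$ does meet the closed triangles near its endpoints, so your gluing should use only the portion of $e$ between its two crossing points; both points are trivially repaired and do not affect the bijection or the final claim about intersections of diamonds.
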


Finally, we note that the perimeter of a diamond separates it from the rest of the graph. Since vertices of $S$ are never contained in the interior of a diamond, this yields the following.

\begin{lemma}\label{lem:perimeter-intersects}
Let $p\in S$ and $u$ be a vertex of $G$ belonging to the diamond $\Diam(e)$ for some edge $e$ of $\VorDiag_S$. Then every path in $G$ connecting $u$ and $p$ intersects the perimeter of $\Diam(e)$.
\end{lemma}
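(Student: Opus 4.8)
The plan is to prove this by a Jordan-curve argument on the sphere $\Sigma$, resting on two ingredients: (a) the perimeter of $\Diam(e)$ bounds the diamond as a region of $\Sigma$, and (b) no vertex of $S$ lies in the interior of any diamond. Both are essentially topological, and once they are in place the lemma follows in a few lines.

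I would first establish (b). The existence of the edge $e$ already forces $3|S|-6\ge 1$, hence $|S|\ge 3$, so $\VorDiag_S$ is a connected $3$-regular multigraph with at least one branching node. For any $q\in S$, the face of $\VorDiag_S$ corresponding to $q$ is bounded by a nonempty closed walk and therefore has some branching node $f$ on its boundary; since $f$ (a triangular face of $G$) then borders $q$'s Voronoi cell, it has a vertex $u\in\VorPrt_S(q)$ (the $G$-edge of $f$ met by an edge of $\VorDiag_S$ bordering $q$'s face has such an endpoint), so $(q,u,f)$ is an incidence and its spoke — the shortest $q$–$u$ path in $G$ — passes through $q$. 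Thus every vertex of $S$ lies on some spoke. By Proposition~\ref{prop:diagram} the interiors of all diamonds are exactly the regions into which $\Sigma$ falls apart once all spokes and all branching nodes are deleted, so each such interior is disjoint from every spoke; combining the two facts, no vertex of $S$ — in particular not $p$ — lies in $\mathrm{int}(\Diam(e))=:R$.

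For (a) and the deduction: by the construction of $\Diam(e)$, deleting the interiors of the two branching nodes $f_1,f_2$ incident to $e$ together with the four spokes $\Spoke(\tau_{i,j})$ cuts $\Sigma$ into two open regions, and $R$ is the one meeting $e$. Hence $\partial R\subseteq\partial f_1\cup\partial f_2\cup\bigcup_{i,j}\Spoke(\tau_{i,j})$, and — this is the point that needs care — the only $G$-edges of $\partial f_1$ and $\partial f_2$ that actually lie on $\partial R$ are the two edges $u_{1,1}u_{1,2}$ and $u_{2,1}u_{2,2}$ that $e$ crosses at its ends, so $\partial R$ is contained in the perimeter of $\Diam(e)$; moreover all vertices and edges of $\Diam(e)$ are embedded in $R\cup\bigcup_{i,j}\Spoke(\tau_{i,j})\subseteq\overline R$, so in particular $u\in\overline R$. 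Now if $p$ lies on the perimeter of $\Diam(e)$, the path $P$ trivially meets it (it ends at $p$), so assume not; since $p$ is a vertex of $G$ it avoids $\mathrm{int}(f_1)\cup\mathrm{int}(f_2)$, and $p\notin R$ by (b), hence $p\notin\overline R$. Walking along $P$ from $u\in\overline R$ toward $p\notin\overline R$, let $v$ be the last vertex in $\overline R$ and $v'$ the next one; since interiors of distinct $G$-edges are pairwise disjoint and contain no $G$-vertex, the open arc of $vv'$ lies entirely in $R$ or entirely outside $\overline R$ unless $vv'$ is itself one of the boundary $G$-edges of $\partial R$. In the former cases the usual neighbourhood argument at $v$ (and at $v'$) forces $v\in\partial R$; in the latter $vv'$ is a perimeter edge. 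Either way $P$ meets the perimeter of $\Diam(e)$, as claimed.

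The only genuine obstacle is the highlighted step above — that the two ``non-crossing'' edges of each branching triangle $f_1,f_2$ do not bound $R$, so that $\partial R$ is precisely the perimeter rather than a larger subgraph of $f_1\cup f_2\cup\bigcup_{i,j}\Spoke(\tau_{i,j})$. This is a purely local inspection of the planar embedding around $f_1$ and $f_2$, using that $e$ leaves $f_i$ exactly through the edge $u_{i,1}u_{i,2}$ and that $R$ is by definition the side of the deleted set on which $e$ lies; it is implicit in the diamond construction of Pilipczuk et al.~\cite{PilipczukLW18}. Everything else is the Jordan curve theorem on $\Sigma$ together with the standard fact that two walks built from vertices and edges of a plane graph can intersect only at common vertices or common edges.
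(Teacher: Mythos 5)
Your argument is correct and is essentially the paper's own: the paper justifies this lemma in one sentence by noting that the perimeter separates the diamond from the rest of the graph and that no vertex of $S$ lies in a diamond's interior, and your proposal is exactly this separation argument with the topological details (via Proposition~\ref{prop:diagram} and a boundary-crossing walk along the path) spelled out. The one step you flag as needing local inspection of the embedding around $f_1,f_2$ is true and is treated as immediate in the paper as well, so your write-up is at least as complete as the original.
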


\subsection{Coresets}

In most our algorithms, the starting point is the notion of a \emph{coreset} and a corresponding result of Feldman and Langberg~\cite{coreset}. 
To this end, we need to slightly generalize the notion of a client set in a \textsc{$k$-Median} instance.
A \emph{client weight function} is a function $\wei \colon \clients \to \mathbb{R}_{\geq 0}$.
Given a set $\openfac \subseteq \fac$ of open facilities, the (weighted) connection cost is defined as
\begin{equation*}\conncost(\openfac, \wei) = \sum_{v \in \clients} \dist(v, \openfac) \cdot \wei(v).\end{equation*}
That is, every client $v$ is assigned a weight $\wei(v)$ with which it contributes to the objective function.
The \emph{support} of a weight function $\wei$ is defined as $\supp(\wei) = \{v \in \clients~|~\wei(v) > 0\}$. 
From now on, whenever we speak about a \textsc{$k$-Median} instance without specified client weight function, we assume the standard function assigning each client weight $1$.

The essence of coresets is that one can find weight functions with small support that well approximate the original instance.
Given a \textsc{$k$-Median} instance $(G,\fac,\clients,k)$ (without weights) and an accuracy parameter $\varepsilon > 0$, a \emph{coreset} is a weight function $\wei$ such that for every 
set $\openfac \subseteq \fac$ of size at most $k$, it holds that
\begin{equation*}|\conncost(\openfac, \clients) - \conncost(\openfac, \wei)| \leq \varepsilon\cdot \conncost(\openfac, \clients).\end{equation*}
We rely on the following result of Feldman and Langberg~\cite{coreset}.

\begin{theorem}[Theorem 15.4 of \cite{coreset}]\label{thm:coreset}
Given a \textsc{$k$-Median} instance $(G,\fac,\clients,k)$ with $n = |V(G)|$ and accuracy parameter $\eps>0$,
one can in randomized polynomial time 
find a weight function $\wei$ with support of size $\Oh(k\eps^{-2} \log n)$
that is a coreset with constant probability.
\end{theorem}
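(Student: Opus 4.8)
The plan is to establish this via the \emph{sensitivity sampling} (importance sampling) framework of Feldman and Langberg; the statement is precisely their Theorem~15.4 read off for the shortest-path metric of a graph, so I sketch their argument in the present notation. First, compute in randomized polynomial time a feasible solution $A\subseteq\fac$ with $|A|\le k$ whose cost $\nu:=\conncost(A,\clients)$ is within a constant factor of the optimum; any polynomial-time constant-factor approximation for metric $k$-median suffices. Assign every client $c$ to its nearest facility $a(c)\in A$, forming clusters $C_i=\{c\in\clients:a(c)=a_i\}$, and for $c\in C_i$ set
\[ s(c)\;=\;\frac{\dist(c,A)}{\nu}\;+\;\frac{\conncost(A,C_i)}{|C_i|\cdot\nu}\;+\;\frac{1}{|C_i|}. \]
Two facts drive the construction. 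First, $s$ dominates the true sensitivity up to a constant: for \emph{every} $\openfac\subseteq\fac$ with $|\openfac|\le k$ and every client $c$ one has $\dist(c,\openfac)\le\Oh(1)\cdot s(c)\cdot\conncost(\openfac,\clients)$, which follows from $\dist(c,\openfac)\le\dist(c,a_i)+\dist(a_i,\openfac)$, from averaging $\dist(a_i,\openfac)\le\dist(a_i,c')+\dist(c',\openfac)$ over $c'\in C_i$, and from $\conncost(\openfac,\clients)\ge\Omega(\nu)$ (as $\openfac$ is feasible and $\nu$ is a constant-factor approximation). Second, the \emph{total sensitivity} is small: $S:=\sum_{c\in\clients}s(c)=\Oh(k)$, since the first two terms each sum to $1$ and the third sums to the number of clusters, at most $k$.

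Next, set $m=\Theta(k\eps^{-2}\log n)$ and draw clients $c_1,\dots,c_m$ independently, each equal to a given $c$ with probability $p(c)=s(c)/S$; let $\wei$ be supported on $\{c_1,\dots,c_m\}$ with $\wei(c)=|\{j:c_j=c\}|/(m\,p(c))$. Then $\supp(\wei)$ has size at most $m=\Oh(k\eps^{-2}\log n)$, and $\mathbb{E}[\wei(c)]=1$ for every client $c$, so for each fixed $\openfac$ the weighted cost $\conncost(\openfac,\wei)=\sum_c\wei(c)\dist(c,\openfac)$ is an unbiased estimator of $\conncost(\openfac,\clients)$. All of the above runs in randomized polynomial time.

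It remains to show that the estimator is accurate \emph{simultaneously} for all feasible $\openfac$. Fix one such $\openfac$. By the sensitivity bound, each sampled term $\wei(c_j)\dist(c_j,\openfac)$ is at most $\tfrac{\Oh(1)}{m}\cdot S\cdot\conncost(\openfac,\clients)=\tfrac{\Oh(k)}{m}\conncost(\openfac,\clients)$, and a similar bound controls the variance, so a Bernstein/Hoeffding estimate gives $|\conncost(\openfac,\wei)-\conncost(\openfac,\clients)|\le\eps\cdot\conncost(\openfac,\clients)$ except with probability $e^{-\Omega(\eps^2 m/k)}$. To upgrade this to a uniform statement one union-bounds over the range space of the function family $\{c\mapsto\dist(c,\openfac):\openfac\subseteq\fac,\ |\openfac|\le k\}$, whose complexity on an $n$-point metric is controlled by standard VC/pseudo-dimension estimates; taking $m=\Theta(k\eps^{-2}\log n)$ with a sufficiently large hidden constant drives the total failure probability below any prescribed constant, which yields the claimed ``constant probability'' guarantee.

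The main obstacle is this last step. A naive union bound over all $\le n^{k}$ solutions, or the crude pseudo-dimension bound $\Oh(k\log n)$ for the class $\{c\mapsto\dist(c,\openfac)\}$, would cost an extra factor of $k$ in $m$, giving coreset size $\Oh(k^{2}\eps^{-2}\log n)$ rather than the desired $\Oh(k\eps^{-2}\log n)$. Removing this factor requires the more careful Feldman–Langberg analysis that partitions the coreset according to the clusters $C_i$ and bounds the deviation contributed by each group separately, together with the usual care needed to convert the additive concentration into the multiplicative guarantee in the definition of a coreset (for each $\openfac$, separating the clients close to $\openfac$ from those far from it). I expect getting these two points right — the effective dimension and the multiplicative error — to be where essentially all of the work lies.
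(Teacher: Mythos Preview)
The paper does not prove this statement at all: it is quoted verbatim as Theorem~15.4 of Feldman and Langberg~\cite{coreset} and used as a black box, so there is no ``paper's own proof'' to compare against. Your sketch is a fair high-level outline of the Feldman--Langberg sensitivity-sampling argument, and the place where you stop is exactly the nontrivial part of their result.

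The gap you flag is genuine and is not a detail one can wave away. Getting coreset size $\Oh(k\eps^{-2}\log n)$ rather than $\Oh(k^{2}\eps^{-2}\log n)$ is precisely what separates the Feldman--Langberg construction from simpler earlier ones; the paper even remarks explicitly that Chen's construction~\cite{chen-coreset} yields $\Oh(k^{2}\eps^{-2}\log n)$ and that this weaker bound would already suffice for Theorem~\ref{thm:kmedian-fpas} but \emph{not} for Theorem~\ref{thm:kmedian-bi}, where the linear dependence on $k$ is used. The missing ingredient in your sketch is the Feldman--Langberg reduction to an $\eps$-approximation of the \emph{sensitivity-reweighted} function family $\{c\mapsto \dist(c,\openfac)/s(c)\}$ together with a bound on the dimension of that (normalized) family that does not pick up the extra factor of $k$; the per-cluster analysis you allude to is along the right lines, but as written your proposal stops short of actually carrying it out. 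If you intend to treat the theorem as a cited result, as the paper does, that is fine; if you intend to reprove it, the argument as it stands only reaches the $\Oh(k^{2}\eps^{-2}\log n)$ bound.
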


We note that Ke Chen~\cite{chen-coreset} gave a construction of a strong coreset with support of size $\Oh(k^2\eps^{-2}\log n)$ that is much simpler than the later construction of Feldman and Langberg~\cite{coreset}.
By using this construction instead, we would obtain a weaker version of Theorem~\ref{thm:fac-coreset}, with a bound on $|\fac_0|$ that is quadratic in $k$ instead of linear.
This would be perfectly sufficient to derive an FPT approximation scheme as in Theorem~\ref{thm:kmedian-fpas}, but for Theorem~\ref{thm:kmedian-bi} we will vitally use the stronger statement.
A construction of coresets with similar size guarantees, but maintainable in the streaming model, has been proposed by Braverman et al.~\cite{BravermanFL16}.

\subparagraph*{Divisions.} 
A {\em{division}} of graph $G$ is a family $\Rr$ of subgraphs of $G$, called {\em{regions}},
such that every edge of $G$ is contained in exactly one region and every vertex of $G$ is contained in at least one region.
For a region $R\in \Rr$, the {\em{boundary}} of $R$, denoted $\prt R$, is the set of those vertices of $R$ that are contained also in some other region $R'\in \Rr$.
For a positive integer $r$, a division $\Rr$ is called an {\em{$r$-division}} if every region contains at most $r$ vertices.

The following lemma, which can be traced to the work of Fredrickson~\cite{Frederickson87}, expresses the well-known property that planar graphs admit $r$-divisions with small boundary.
We remark that Fredrickson gave $r$-divisions with stronger properties, but this will be the generality needed here.

\begin{lemma}[\cite{Frederickson87}]\label{lem:division}
There exists a constant $c$ such that for every positive integer $r$, 
every planar graph $G$ on $n$ vertices admits an $r$-division $\Rr$ such that 
\begin{equation*}|\Rr|\leq Cn/r\qquad\textrm{and}\qquad\sum_{R\in \Rr} |\prt R| \leq Cn/\sqrt{r}.\end{equation*}
Moreover, given $G$ and $r$ such an $r$-division can be computed in polynomial time.
\end{lemma}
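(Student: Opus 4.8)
The plan is to apply the classical planar separator theorem of Lipton and Tarjan recursively. We use it in the form: there is an absolute constant $\alpha$ such that every planar graph $H$ contains a set $S\subseteq V(H)$ with $|S|\le\alpha\sqrt{|V(H)|}$ whose deletion leaves every connected component with at most $\tfrac{2}{3}|V(H)|$ vertices, and such an $S$ can be computed in polynomial (indeed linear) time; this is the only external ingredient. The division is built by a recursive splitting procedure. Maintain a family of edge-disjoint subgraphs of $G$ that together cover $V(G)$ and whose union is $G$, starting from the single member $G$. While some member $H$ has more than $r$ vertices, remove $H$, compute a separator $S$ of $H$ as above, group the components of $H-S$ into two parts $V_1,V_2$ of size at most $\tfrac{2}{3}|V(H)|$ each (always possible since each component already has at most $\tfrac{2}{3}|V(H)|$ vertices), and add back, for $i=1,2$, the subgraph $H_i$ consisting of $V_i$, the vertices of $S$ with a neighbour in $V_i$, and all edges of $H$ with an endpoint in $V_i$ (edges of $H$ inside $S$ going into $H_1$, say). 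The invariant is preserved throughout, and at termination every region has at most $r$ vertices, so the output is an $r$-division. Because each split produces pieces with at most $\tfrac{2}{3}|V(H)|+\alpha\sqrt{|V(H)|}$ vertices, every piece drops to at most $r$ vertices after $\Oh(\log n)$ rounds of splitting, which together with the polynomial-time separator routine gives the claimed polynomial running time.

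It remains to bound $\sum_{R\in\Rr}|\prt R|$ and $|\Rr|$. For the boundary, note that every vertex of $\prt R$ (for every $R$) was a separator vertex of some split along the branch of the recursion that created $R$, so $\sum_{R\in\Rr}|\prt R|$ is at most the total size of all separators used. Writing $T(m)$ for the worst-case value of this total over $m$-vertex inputs, we obtain a recurrence of the form $T(m)\le\alpha\sqrt{m}+T(m_1)+T(m_2)$ with $m_1+m_2\le m$ and $m_1,m_2\le\tfrac{2}{3}m$, and $T(m)=0$ once $m\le r$. The target bound $T(m)=\Oh(m/\sqrt r)$ then follows from a divide-and-conquer analysis whose intuition is that, over the $\Oh(\log(m/r))$ levels of the recursion, the $\sqrt{m}$-contributions are geometrically dominated by the bottom level, where there are $\Theta(m/r)$ pieces of size $\Theta(r)$ contributing $\Theta((m/r)\cdot\sqrt r)=\Theta(m/\sqrt r)$ in total. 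Once the bound $\sum_{R\in\Rr}|\prt R|=\Oh(n/\sqrt r)$ is in place, the bound $|\Rr|=\Oh(n/r)$ follows either from a post-processing merge of under-full regions — repeatedly fusing a region of fewer than $r/2$ vertices into a neighbour while the union still has at most $r$ vertices, which only shrinks the total boundary — or, simultaneously with the boundary bound, from Frederickson's refinement that uses a weighted balanced separator also controlling a secondary mass~\cite{Frederickson87}.

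The step I expect to be the main obstacle is exactly this accounting, i.e.\ solving the recurrence for $T(m)$ cleanly: a plain induction hypothesis $T(m)\le Cm/\sqrt r$ does not close (the additive $\alpha\sqrt m$ has nowhere to go), and one must strengthen it, for instance to something like $T(m)\le C\bigl(m/\sqrt r-\sqrt m\bigr)$ on the regime $m>r$ while treating small $m$ separately; similarly one must be careful that a single, possibly very unbalanced, application of the separator theorem does not blow up the number of regions beyond $\Oh(n/r)$. Handled naively the recursion only yields $\Oh\bigl((n/\sqrt r)\log n\bigr)$ total boundary and $\Omega\bigl((n/r)\log n\bigr)$ regions; removing these spurious logarithmic factors — via the merge step above, or via Frederickson's weighted-separator construction — is the technical heart of the argument, after which both size bounds hold at once and the overall procedure still runs in polynomial time. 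This completes the proof.
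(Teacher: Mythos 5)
The paper itself gives no proof of this lemma: it is imported as a black box from Frederickson~\cite{Frederickson87}, so there is no internal argument to compare yours against. Your route --- recursively applying the Lipton--Tarjan separator until all pieces have at most $r$ vertices, charging region boundaries to the separators used, and then controlling the number of regions --- is the standard one and is essentially the same machinery that underlies Frederickson's construction. So the approach is right; the issue is that the step you explicitly leave open (``the technical heart'') is precisely the content of the lemma, so as written this is an outline rather than a proof, and it also contains two small inaccuracies in the accounting. First, in your recurrence you write $m_1+m_2\le m$, but in your own construction separator vertices are duplicated into both sides, so the correct statement is $m_1+m_2\le m+\alpha\sqrt m$; relatedly, $\sum_{R\in\Rr}|\prt R|$ is not literally ``at most the total size of all separators used'' --- a vertex lying in $d\ge 2$ regions contributes $d$ to the sum but need only appear in $d-1$ separators (its multiplicity increases only at splits where it is a separator vertex), which costs a factor of at most $2$ and should be said.

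The good news is that the gap closes along exactly the lines you guess, and you do not need the merge phase or Frederickson's weighted refinement for the bounds stated here. Since the two parts are each at most $\tfrac23|V(H)|$ and together cover $H-S$, each part automatically has at least $\tfrac13|V(H)|-\alpha\sqrt{|V(H)|}$ vertices; with this two-sided balance and $m_1+m_2\le m+\alpha\sqrt m$, the strengthened hypothesis $T(m)\le C\bigl(m/\sqrt r-\sqrt m\bigr)$ for $m>r$ does go through for $r$ above a constant depending on $\alpha$, because $\sqrt{m_1}+\sqrt{m_2}\ge(1+c)\sqrt m$ for an absolute $c>0$. The region count then also falls out of the same recursion: every leaf comes from splitting a piece of size greater than $r$, hence has at least $\tfrac13(r+1)-\alpha\sqrt{r+1}\ge r/4$ vertices for $r$ large enough, while $\sum_{R\in\Rr}|V(R)|\le n+\Oh(n/\sqrt r)$ by the same separator charging, giving $|\Rr|=\Oh(n/r)$. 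Small $r$ (below the constant threshold) must be handled separately, e.g.\ by taking single edges as regions, which satisfies both bounds with an adjusted constant; note in passing that for $r=1$ the statement is vacuous unless $G$ is edgeless. With these points filled in your construction is a valid proof; alternatively, keeping the bare citation, as the paper does, is of course also legitimate.
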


\newcommand{\faccmp}{\fac^\circ}

\subparagraph*{PTAS for {\sc{$k$-Median}} of~\cite{local-search}.} 
We now review the approximation scheme for {\sc{$k$-Median}} of Cohen-Addad et al.~\cite{local-search}, as we will use it as a black-box. 
Formally, we shall need the following statement.

\begin{theorem}[\cite{local-search}]\label{thm:kmedian-ptas}
Given a {\sc{$k$-Median}} instance $(G,\clients,\fac,k)$ where $G$ is planar, a subset of facilities $\faccmp\subseteq \fac$ with $|\faccmp|\leq k$, and an accuracy parameter $\eps>0$, 
it is possible in time $|\fac\setminus \faccmp|^{\Oh(1/\eps^2)}\cdot n^{\Oh(1)}$ to compute a solution $\openfac$ with $\faccmp\subseteq \openfac$
whose connection cost is at most $(1+\eps)$ times larger than the minimum possible connection cost of a solution that contains~$\faccmp$.
\end{theorem}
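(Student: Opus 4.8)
I would follow the constrained local-search approach of Cohen-Addad et al.~\cite{local-search}. Fix $s:=\Oh(\eps^{-2})$ (the precise constant comes from the analysis below). The algorithm maintains a solution $\openfac$ with $\faccmp\subseteq\openfac$ and $|\openfac|\le k$, initialized as $\faccmp$ padded by arbitrary facilities of $\fac$ up to size $k$ (padding never increases the connection cost, so it is harmless). A \emph{local step} picks $A\subseteq\openfac\setminus\faccmp$ and $B\subseteq\fac\setminus\faccmp$ with $|A|,|B|\le s$ and $|(\openfac\setminus A)\cup B|\le k$, and replaces $\openfac$ by $(\openfac\setminus A)\cup B$ whenever this strictly decreases $\conncost(\cdot,\clients)$; the algorithm iterates until no improving step remains, then outputs $\openfac$. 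Every intermediate solution contains $\faccmp$, as required. One iteration enumerates all such pairs $(A,B)$ --- there are $|\fac\setminus\faccmp|^{\Oh(s)}$ of them, since both $A$ and $B$ are small subsets of $\fac\setminus\faccmp$ --- and evaluates each candidate in $n^{\Oh(1)}$ time, so it runs in $|\fac\setminus\faccmp|^{\Oh(1/\eps^2)}\cdot n^{\Oh(1)}$ time. I would keep the number of iterations to $n^{\Oh(1)}$ by the usual device of performing a step only when the cost drops by a factor at most $1-\eps/n^{\Oh(1)}$, preceded by a preprocessing step bounding the ratio of the initial cost to the optimum by $n^{\Oh(1)}$; this worsens the approximation ratio only by a $(1+\Oh(\eps))$ factor, absorbed by rescaling $\eps$.

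The substance of the proof is the \emph{locality gap}: if $\openfac$ admits no improving $s$-swap that keeps $\faccmp$ open, then $\conncost(\openfac,\clients)\le(1+\eps)\conncost(\optfac,\clients)$, where $\optfac$ denotes a cost-minimum solution with $\faccmp\subseteq\optfac$; by padding I may assume $|\openfac|=|\optfac|=k$. For each client $c$, write $s(c)\in\openfac$ and $\sigma(c)\in\optfac$ for its servers in the two solutions, and form the \emph{exchange graph} $H$ on vertex set $\openfac\uplus\optfac$ (disjoint copies, so a facility of $\faccmp$ occurs twice) by adding, for every client $c$, the edge $s(c)\sigma(c)$. Here planarity of $G$ is used crucially: after discarding redundant parallel edges and passing to a connected spanning subgraph, $H$ is planar, because it is obtained from $G$ by contractions and deletions --- one realizes each assignment edge as the concatenation of the shortest path from $s(c)$ to $c$ with the shortest path from $c$ to $\sigma(c)$, and then cleans up. I would import this lemma from~\cite{local-search} essentially verbatim; it (and only it) confines the scheme to planar, or more generally proper minor-closed, metrics.

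Given planarity of $H$, I would apply Lemma~\ref{lem:division} to it with $r=\Theta(\eps^{-2})$, chosen so that $C/\sqrt r\le\eps$, obtaining a partition $\Rr$ of $H$ into regions of at most $r$ vertices with $\sum_{R\in\Rr}|\prt R|\le\Oh(\eps k)$; the facilities of $\faccmp$ are placed in singleton regions and treated as permanently open. Each region $R$ supplies a \emph{test swap} removing $A_R:=(V(R)\cap\openfac)\setminus\faccmp$ and adding $B_R:=(V(R)\cap\optfac)\setminus\faccmp$; to make these balanced ($|A_R|=|B_R|$, so the size budget is met while no $\faccmp$-facility is ever removed) I would use the standard matching of~\cite{local-search}, pairing each $\optfac$-facility with a nearby $\openfac$-facility that captures most of its clients. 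Each test swap has size $\le r=s$ and is admissible for the constrained search, so local optimality yields $\conncost(\openfac,\clients)\le\conncost((\openfac\setminus A_R)\cup B_R,\clients)$ for every $R$. I would bound the right-hand side by rerouting: a client $c$ with $\sigma(c)\in B_R$ moves to $\sigma(c)$ at cost $\dist(c,\optfac)$; a client with $s(c)\in A_R$ but $\sigma(c)\notin B_R$, whose assignment edge is then cut by the division and hence incident to $\prt R$, moves to a substitute facility at extra cost controlled through the triangle inequality; all other clients stay. Summing the $|\Rr|$ inequalities, the main terms collapse to $\conncost(\optfac,\clients)-\conncost(\openfac,\clients)$, while the error is carried entirely by assignment edges incident to $\bigcup_{R\in\Rr}\prt R$ and is therefore at most $\eps\cdot(\conncost(\openfac,\clients)+\conncost(\optfac,\clients))$ --- for which one additionally arranges that the cut edges carry only an $\eps$-fraction of the connection cost, via a second, cost-weighted application of the division lemma, again as in~\cite{local-search}. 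Rearranging $0\le\conncost(\optfac,\clients)-\conncost(\openfac,\clients)+\eps\,(\conncost(\openfac,\clients)+\conncost(\optfac,\clients))$ yields $(1-\eps)\conncost(\openfac,\clients)\le(1+\eps)\conncost(\optfac,\clients)$, i.e.\ the claimed $(1+\eps)$-approximation after rescaling $\eps$. The two steps I expect to be the real work are (a) exhibiting the exchange graph as a planar minor of $G$, and (b) the balanced facility-matching together with the second, cost-weighted division that keeps the cut cost down to an $\eps$-fraction; the extra constraint $\faccmp\subseteq\openfac$ costs essentially nothing, since every test swap is designed to leave $\faccmp$ intact and every rerouting uses only facilities that remain open.
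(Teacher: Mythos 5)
Your proposal is correct and follows essentially the same route as the paper: the paper also takes the local search of radius $\Oh(1/\eps^2)$ from~\cite{local-search}, restricts swaps to $\fac\setminus\faccmp$ so that one iteration costs $|\fac\setminus\faccmp|^{\Oh(1/\eps^2)}\cdot n^{\Oh(1)}$ and the number of iterations is polynomial, and observes that the approximation analysis survives because the mixed/test solutions used in the locality-gap argument contain $\faccmp$ whenever both the current and the optimum solution do. The only difference is presentational: the paper treats the locality-gap analysis (exchange graph, $r$-division, test swaps) entirely as a black box from~\cite{local-search}, whereas you re-sketch its internals while importing the same two key lemmas verbatim.
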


The statement of Theorem~\ref{thm:kmedian-ptas} somewhat differs from the one presented in~\cite{local-search}; let us review the differences.

First, the result of~\cite{local-search} works in a larger generality, when the graph $G$ is drawn from any fixed proper minor-closed class; we do not need this generality and we focus on the class of planar graphs.

Second, for the running time, the algorithm proposed by Cohen-Addad et al.~\cite{local-search} is actually a simple local search of radius $\Oh(1/\eps^2)$ that stops whenever it cannot find an improvement step
that would decrease the cost by a multiplicative factor of at least $(1+\eps)$. Observe that since in an improvement step we can add or remove only facilities from $\fac\setminus \fac_0$,
within local search radius $\Oh(1/\eps^2)$ there are $|\fac\setminus \faccmp|^{\Oh(1/\eps^2)}$ possible improvement steps, and evaluating each of them takes polynomial time.
Finally, as argued in~\cite{local-search}, the algorithm terminates within $\Oh(|\clients|/\eps)$ steps, so the claimed running time follows.

Third, in the statement of Theorem~\ref{thm:kmedian-ptas} we assume that there is a set $\faccmp$ of {\em{compulsory}} facilities that are required to be taken to the solution.
While this is not stated in~\cite{local-search}, it is straightforward to add this feature to the result.
In the algorithm we start with $\faccmp$ as an original solution and we consider only local search steps that keep it intact.
It is straightforward to see that the analysis of the approximation ratio still holds.
In principle, the analysis relies on showing that if the current solution $\openfac$ is more expensive by at least a multiplicative
factor of $(1+\eps)$ than the optimum solution $\openfac_0$, 
then there is a mixed solution $\openfac'$ that is cheaper than $\openfac$ and the symmetric difference of $\openfac$ and $\openfac'$ has size $\Oh(1/\eps^2)$.
It then suffices to observe that if $\openfac$ and $\openfac_0$ both contain $\faccmp$, then so does the mixed solution~$\openfac'$.

\section{Facility coreset for \textsc{$k$-Median} in planar graphs}\label{sec:main}

In this section we give a coreset for centers for the {\sc{$k$-Median}} problem, that is, we prove Theorem~\ref{thm:fac-coreset}.
We shall focus on the following lemma, which in combination with Theorem~\ref{thm:coreset} yields Theorem~\ref{thm:fac-coreset}.

\begin{lemma}\label{lem:kmedian}
Given a \textsc{$k$-Median} instance $(G,\fac,\clients,k)$ with a weight function $\wei$
and an accuracy parameter $\varepsilon > 0$, 
      one can in polynomial time compute a set $\fac_0 \subseteq \fac$ of size
      $|\supp(\wei)| \cdot (\varepsilon^{-1} \log |V(G)|)^{\Oh(\varepsilon^{-3})}$ satisfying the following condition with constant probability: there exists a set
      $\openfac_0 \subseteq \fac_0$ of size at most $k$ such that for every 
      set $\openfac \subseteq \fac$ of size at most $k$ it holds that
      $\conncost(\openfac_0, \wei) \leq (1+\varepsilon)\cdot \conncost(\openfac, \wei)$.
\end{lemma}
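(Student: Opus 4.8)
The plan is to follow exactly the outline sketched in the ``Our techniques'' paragraph, which I now flesh out. First I would triangulate $G$ (with infinite-weight edges), perturb the lengths to make shortest paths unique and finite distances pairwise distinct, and then construct the Voronoi diagram $\VorDiag_S$ with $S = \supp(\wei)$, the set of (positively weighted) clients. By the cited lemma this diagram has $|S|$ faces, $\Oh(|S|)$ branching nodes and $\Oh(|S|)$ edges, hence (by Proposition~\ref{prop:diagram}) the plane decomposes into $\Oh(|S|) = \Oh(|\supp(\wei)|)$ diamonds, each bounded by a perimeter made of four spokes (plus two crossing edges of $G$, which I will fold into the neighbouring spokes). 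The key geometric fact I will lean on throughout is Lemma~\ref{lem:perimeter-intersects}: every client–facility path enters a diamond only through its perimeter, so to control the cost of re-routing a client to a replacement facility it suffices to understand how the replacement behaves relative to the perimeter.

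Next I would define the portal placement. On each spoke $Q$ with client endpoint $c$, put portals at the vertices of $Q$ that are closest from $c$-side to the geometric distances $1, (1+\eps), (1+\eps)^2, \ldots$ up to the length of $Q$ (which is at most the diameter, so $\Oh(\eps^{-1}\log n)$ portals per spoke, $\Oh(\eps^{-1}\log n)$ per diamond; including a distance-$0$ portal at $c$ itself and handling distances below $1$ by putting a portal at every such vertex, which we may assume are few after rescaling, or simply by the standard trick of treating sub-unit distances as $0$ up to additive slack absorbed into $\eps$). The crucial snapping estimate is: if a path $P$ crosses $Q$ at a vertex $x$ at distance $d$ from $c$, then the portal $\pi$ on $Q$ nearest to $x$ satisfies $\dist(x,\pi) \le \eps \cdot d \le \eps\cdot \dist(x,c)$, by the geometric spacing. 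Then, for a facility $f$ inside diamond $\Delta$, I define its \emph{profile}: for each of the four spokes $Q$ in the perimeter, record (i) the index of the portal on $Q$ closest to $f$, (ii) for that portal and its $\Oh(\eps^{-3})$ portal-neighbours on $Q$ (on both sides), the distance $\dist(f,\cdot)$ rounded to the nearest power of $(1+\eps)$, and (iii) the distance $\dist(f,c)$ to the client endpoint of $Q$, again rounded to a power of $(1+\eps)$. Since each rounded distance takes $\Oh(\eps^{-1}\log n)$ values, there are $(\eps^{-1}\log n)^{\Oh(\eps^{-3})}$ profiles per diamond, and $\fac_0$ is obtained by keeping, for each diamond and each realised profile, exactly one facility of that profile; this gives $|\fac_0| \le |\supp(\wei)|\cdot(\eps^{-1}\log n)^{\Oh(\eps^{-3})}$ as required, and it is clearly computable in polynomial time. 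We then set $\openfac_0$ to be the image of an optimal size-$\le k$ solution $\openfac^\star$ under the map sending each $f$ to the stored representative of its profile; note $|\openfac_0|\le|\openfac^\star|\le k$.

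The heart of the argument — and the step I expect to be the main obstacle — is the \emph{replacement lemma}: if $f,f'$ lie in the same diamond $\Delta$ and have the same profile, then for every client $c$ (with its closest open facility being $f$) we have $\dist(c,f') \le (1+\Oh(\eps))\,\dist(c,f)$. The proof goes by taking the shortest $c$–$f$ path $P$, which by Lemma~\ref{lem:perimeter-intersects} crosses the perimeter of $\Delta$, say first at a point on spoke $Q$ with client endpoint $c_Q$; let $\pi$ be the nearest portal to that crossing. Writing $P$ as a $c$-to-$\pi$ part followed by a $\pi$-to-$f$ part, I bound the $c$-to-$\pi$ part using the snapping estimate (cost changes by a $(1+\eps)$ factor relative to reaching the crossing point, and the detour $\dist(\text{crossing},\pi)\le\eps\dist(\text{crossing},c_Q)\le\eps\dist(\text{crossing},c)$ if $c$ is on the $c_Q$-side, or is otherwise absorbed because then $\dist(c,c_Q)$ is itself comparable), and then I need that $\dist(\pi,f')$ is comparable to $\dist(\pi,f)$. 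This is where the ``$\Oh(\eps^{-3})$ neighbouring portals'' come in: if $f$'s closest portal on $Q$ is $\pi_0$ and $\pi$ is within $\Oh(\eps^{-3})$ portal-steps of $\pi_0$, the profile records $\dist(f,\pi)$ up to $(1+\eps)$, hence $\dist(f',\pi)$ is within $(1+\eps)^2$ of it; if $\pi$ is farther than $\Oh(\eps^{-3})$ steps from $\pi_0$, then $\dist(f,\pi_0)$ is tiny compared to the portal spacing near $\pi$ (which grows geometrically), so $\dist(\pi,c_Q)$ dominates and one re-routes $f'$ to $c_Q$ first (cost $\dist(f',c_Q)$, recorded up to $(1+\eps)$ and comparable to $\dist(f,c_Q)\le\dist(f,\pi_0)+\dist(\pi_0,c_Q)$) and then along $Q$ from $c_Q$ back out to $\pi$ and onto the original route — the along-$Q$ cost telescopes and is bounded by $\Oh(\eps)\cdot\dist(\pi,c_Q)$. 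Carefully assembling these cases, reassigning a global $\eps' = \Theta(\eps)$ at the start so the constants come out to $(1+\eps)$, and then summing $\dist(c,f')\le(1+\eps)\dist(c,f)$ over all clients $c$ weighted by $\wei(c)$ gives $\conncost(\openfac_0,\wei)\le(1+\eps)\conncost(\openfac^\star,\wei)\le(1+\eps)\conncost(\openfac,\wei)$ for every size-$\le k$ solution $\openfac$, since $\openfac^\star$ is optimal. The only probabilistic ingredient is inherited from the perturbation / any randomness in constructing the diagram, so ``constant probability'' is immediate; all the rest is deterministic.
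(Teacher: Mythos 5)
Your overall architecture (Voronoi diagram of the weighted clients, diamonds, geometrically spaced portals on spokes, keeping one facility per profile per diamond, mapping the optimum into $\fac_0$) matches the paper, but there are two genuine gaps. The first is in the profile definition and the resulting replacement lemma, which is the heart of the proof. You center the recorded window of $\Oh(\eps^{-3})$ portals at the portal $\pi_0$ \emph{closest to the facility $f$}, whereas the paper centers it at the ``balance'' index $\frth$, the least $\iota$ with $\dist(\port{\tau,\iota},p)>\eps\cdot\dist(\port{\tau,\iota},f)$. This difference matters. In your far case you assert that if the crossing portal $\pi$ is more than $\Oh(\eps^{-3})$ steps from $\pi_0$, then $\dist(f,\pi_0)$ is tiny compared to the spacing near $\pi$; but $\pi_0$ being the argmin over portals says nothing about the magnitude of $\dist(f,\pi_0)$ — the facility can be far from the entire spoke. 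Concretely, take $f$ at distance roughly $D$ from every portal of the spoke, with $\pi_0$ at a high index and the crossing at a low index $\iota$ with $\dist(p,\pi)$ comparable to $D$: then $\dist(f,\pi)$ is not recorded, the only constraint tying $f'$ to $f$ near $\pi$ is the triangle inequality through $p$, and a same-profile $f'$ can satisfy $\dist(f',\pi)\approx\dist(f,\pi)+\dist(p,\pi)$, so the rerouting loses a constant factor rather than $(1+\eps)$. Your proposed detour via $c_Q$ costs an extra additive $\dist(\pi,c_Q)\approx\dist(u,p)$, which is only bounded by $\dist(v,f)$ itself, again giving a constant-factor, not $(1+\eps)$, loss; the paper avoids this precisely because the definition of $\frth$ guarantees in the close case $\dist(\port{\tau,\iota},p)\le\eps\,\dist(\port{\tau,\iota},f)$ and in the far case $\dist(p,f)\le 2\eps\,\dist(p,u)$, so routing through $p$ is cheap \emph{relative to} $\dist(u,f)$. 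With your argmin-centered profile these inequalities are simply not available.

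The second gap is the missing normalization of the metric, which is also where the ``constant probability'' actually comes from. With arbitrary positive edge lengths, the number of portals per spoke and the number of distance levels is $\Oh(\eps^{-1}\log(\text{aspect ratio}))$, not $\Oh(\eps^{-1}\log n)$, so neither your portal count nor your $(\eps^{-1}\log n)^{\Oh(\eps^{-3})}$ bound on profiles follows as stated; likewise your plan to ``absorb additive slack into $\eps$'' for sub-unit distances has nothing to charge against. The paper first computes an $\Oh(1)$-approximate solution $\apxfac$ (a randomized step succeeding with constant probability — this, not the perturbation, is the source of the probability in the statement), rescales so that $\conncost(\apxfac,\wei)=|S|/\eps$, and truncates edges longer than $\conncost(\apxfac,\wei)$; this caps the number of levels at $\maxlvl=\Oh(\eps^{-2}\log n)$ and lets the per-client additive $\Oh(1)$ errors sum to $\Oh(|S|)=\Oh(\eps)\cdot\conncost(\apxfac,\wei)\le\Oh(\eps)\cdot\OPT$. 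You need both fixes — the $\frth$-centered profile and the scaling step — for the lemma's size bound and $(1+\eps)$ guarantee to go through.
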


Before we proceed, let us verify that Theorem~\ref{thm:coreset} and Lemma~\ref{lem:kmedian} together imply Theorem~\ref{thm:fac-coreset}.
Given an instance $(G,\fac,\clients,k)$ of {\sc{$k$-Median}}, we first apply Theorem~\ref{thm:coreset} to obtain a coreset $\wei$ with support of size $\Oh(k\eps^{-2} \log n)$.
Next, we pass this coreset to Lemma~\ref{lem:kmedian}, thus obtaining a set $\fac_0\subseteq \fac$ of size $k\cdot (\eps^{-1}\log n)^{\Oh(\eps^{-3})}$.
Let $\openfac_0$ be the subset of $\fac_0$ of size at most $k$ that minimizes $\conncost(\openfac_0,\wei)$.
Then using the approximation guarantees of Theorem~\ref{thm:coreset} and Lemma~\ref{lem:kmedian}, for any $\openfac\subseteq \fac$ we have
\begin{equation*}\conncost(\openfac_0,\clients)\leq (1+\eps)\conncost(\openfac_0,\wei)\leq (1+\eps)^2 \conncost(\openfac,\wei)\leq (1+\eps)^3 \conncost(\openfac,\clients).\end{equation*}
It remains to rescale $\eps$. Hence, for the rest of this section we focus on proving Lemma~\ref{lem:kmedian}.

\medskip

Let $\inst = (G,\fac,\clients,k)$ be an input \textsc{$k$-Median} instance
with a weight function $\wei$, where $G$ is planar. 
Let $\varepsilon > 0$ be an accuracy parameter and without loss of generality assume that $\varepsilon < 1/4$.
Let $n = |V(G)|$ and $m = |E(G)|$. Without loss of generality assume that $n = \Theta(m)$.

We assume that $G$ is embedded in a sphere $\Sigma$ and apply the necessary modifications explained in the beginning of Section~\ref{sec:voronoi} to fit into the framework of Voronoi diagrams.
Denote $S = \supp(\wei)$. We compute the Voronoi partition $\VorPrt_S$ induced by $S$ and the Voronoi diagram $\VorDiag_S$ induced by $S$.
By Proposition~\ref{prop:diagram}, $\VorDiag_S$ has $\Oh(|S|)$ vertices, faces, and edges.

\subparagraph*{Distance levels.}
We first compute an $\Oh(1)$-approximate solution $\apxfac \subseteq \fac$ using the algorithm given by Feldman and Langberg~\cite[Theorem 15.1]{coreset}; 
this algorithm outputs an $\Oh(1)$-approximate solution with constant probability.
Let us scale all the edge lengths in $G$ by the same ratio so that
\begin{equation}\label{eq:conncost-apx}
\conncost(\apxfac, \wei) = |S|/\varepsilon.
\end{equation}
Next, we assign length $+\infty$ to every edge of length larger than $\conncost(\apxfac, \wei)$;
clearly, they are not used in the computation of the connection cost of an optimum solution.
Without loss of generality we assume that all the distances between vertices in $G$ are finite:
otherwise we can split the instance into a number of independent ones, compute a suitable set $\fac_0$ for each of them and take the union.

The next step is to assign levels to distances in the graph.
For any $c \in [0, +\infty)$, define the \emph{level} of $c$, denoted $\lvl(c)$, to be the smallest nonnegative integer $\ell$ such that $c<(1+\eps)^\ell$.
Note that $\lvl(c)=0$ if and only if $c<1$.
Let $\maxlvl=1+\lvl(m\cdot \conncost(\apxfac, \wei))$, then we have
\begin{equation*}
\lvl(\dist(u,v))\in \{0,1,\ldots,\maxlvl-1\}\qquad \textrm{for all }u,v\in V(G).
\end{equation*}
Observe that since $m=\Theta(n)$, by~\eqref{eq:conncost-apx} we have 
\begin{equation}\label{eq:maxlvl}
\maxlvl\leq \Oh(\eps^{-1}\log (m|S|/\eps))\leq \Oh(\eps^{-2}\log n).
\end{equation}


\newcommand{\frth}{\lambda}

\subparagraph*{Portals and profiles.}
Let $\tau = (p, u, f)$ be an incidence in $\VorDiag_S$. 
Let $d(\tau) = \dist(p, u)$ and let $\ell(\tau)=\lvl(d(\tau))$; note that $\Spoke(\tau)$ has length exactly $d(\tau)$.
For every integer $\iota\in \{1,\ldots,\ell(\tau)\}$, 
we define the \emph{portal} $\port{\tau, \iota}$ as a vertex on $\Spoke(\tau)$ at distance exactly $(1+\varepsilon)^{\iota-1}$ from $p$;
we subdivide an edge an create a new vertex to accommodate $\port{\tau, \iota}$ if necessary. 
Furthermore, we add also a portal $\port{\tau, 0} = p$.
Since $\ell(\tau)=\lvl(d(\tau))<\maxlvl$, there are at most $\maxlvl$ portals on the spoke $\Spoke(\tau)$.

Consider a diamond $\Diam(e)$ induced by some edge $e$ of $\VorDiag_S$, and a vertex $v$ in $\Diam(e)$.
Recall that the perimeter of $\Diam(e)$ consists of spokes $\Spoke(\tau_{i,j})$ for four incidence $\tau_{i,j}$, where $i,j \in \{1,2\}$. 
The \emph{profile} of a vertex $w$ belonging to the diamond $\Diam(e)$ consists of the following information, for all $\tau\in \{\tau_{i,j}\colon i,j \in \{1,2\}\}$:
\begin{enumerate}
\item The minimum index $\frth\in \{0,1,\ldots,\ell(\tau)\}$ satisfying 
      \begin{equation*}\dist(\port{\tau,\frth},p) > \varepsilon\cdot \dist(\port{\tau,\frth}, w),\end{equation*}
      where $p=\port{\tau,0}$ is the vertex of $S$ involved in $\tau$.
      If no such index exists, we set $\frth=\ell(\tau)$.
\item Letting
      \begin{equation*}I=\left(\{0\}\cup \{\iota \colon |\iota-\frth|\leq 1/\eps^3\}\right)\cap \{0,1,\ldots,\ell(\tau)\},\end{equation*}
the profile records the value of $\lvl(\dist(w,\port{\tau,\iota}))$ for all $\iota\in I$. 
\end{enumerate}
Whenever speaking about a vertex $w$ and incidence $\tau$, we use $\frth(\tau,w)$ and $I(\tau,w)$ to denote $\frth$ and $I$ as above.
We note that in total there are only few possible profiles.

\begin{claim}\label{cl:profiles-number}
The number of possible different profiles of vertices in $\Diam(e)$ is $\maxlvl^{\Oh(\varepsilon^{-3})}$. 
\end{claim}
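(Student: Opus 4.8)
The plan is a straightforward counting argument, organized incidence by incidence. Recall that the perimeter of $\Diam(e)$ is built from the four spokes $\Spoke(\tau_{i,j})$ with $i,j\in\{1,2\}$, so a profile of a vertex $w\in\Diam(e)$ is an aggregate over at most four incidences $\tau$, and for each such $\tau$ it stores the index $\frth(\tau,w)$ together with the tuple $\bigl(\lvl(\dist(w,\port{\tau,\iota}))\bigr)_{\iota\in I(\tau,w)}$. It therefore suffices to bound the number of distinct such pairs for a single incidence $\tau$ and then take the product over the (at most four) incidences.

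For the index, I would use that $\ell(\tau)=\lvl(d(\tau))<\maxlvl$, so $\frth(\tau,w)\in\{0,1,\ldots,\maxlvl-1\}$ and there are at most $\maxlvl$ choices. The key observation is that once $\frth=\frth(\tau,w)$ has been fixed, the index set $I=I(\tau,w)=\bigl(\{0\}\cup\{\iota:|\iota-\frth|\le 1/\eps^3\}\bigr)\cap\{0,\ldots,\ell(\tau)\}$ is completely determined by $\frth$; in particular there is no separate choice to make for $I$, and $|I|\le 2/\eps^3+2=\Oh(\eps^{-3})$. For each $\iota\in I$ the recorded level $\lvl(\dist(w,\port{\tau,\iota}))$ lies in $\{0,1,\ldots,\maxlvl-1\}$, because after the preprocessing every pairwise distance in $G$ has level strictly below $\maxlvl$; hence the number of possible tuples, given $\frth$, is at most $\maxlvl^{|I|}\le\maxlvl^{\Oh(\eps^{-3})}$.

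Combining, for a fixed incidence $\tau$ the number of distinct pairs is at most $\maxlvl\cdot\maxlvl^{\Oh(\eps^{-3})}=\maxlvl^{\Oh(\eps^{-3})}$, and since a profile consists of one such pair for each of at most four incidences, the total number of possible profiles is at most $\bigl(\maxlvl^{\Oh(\eps^{-3})}\bigr)^{4}=\maxlvl^{\Oh(\eps^{-3})}$, as claimed. There is no real obstacle here; the only points requiring care are noticing that $I$ is a function of $\frth$ alone (so that the two pieces of stored data are not counted in a way that inflates the bound by an extra factor for the choice of $I$), and recording the crude but sufficient bounds $|I|=\Oh(\eps^{-3})$ and ``every relevant distance has level below $\maxlvl$''.
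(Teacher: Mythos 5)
Your proposal is correct and follows essentially the same counting argument as the paper: at most $\maxlvl$ choices of $\frth(\tau,w)$ per incidence (with $I(\tau,w)$ determined by $\frth$), and at most $\maxlvl^{\Oh(\eps^{-3})}$ choices for the recorded levels, multiplied over the four incidences. Your explicit remark that $I$ is a function of $\frth$ alone is a slight tidying of what the paper leaves implicit, but the route is the same.
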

\begin{clproof}
Since $0\leq \ell(\tau)<\maxlvl$ for every incidence $\tau$, there are at most $\maxlvl^4$ choices for the four values $\frth(\tau_{i,j},w)$ for $i,j\in \{1,2\}$.
Further, we have $|I(\tau_{i,j},w)|\leq \Oh(\varepsilon^{-3})$, so there are at most $\maxlvl^{\Oh(\varepsilon^{-3})}$ 
choices for the values $\lvl(\dist(w,\port{\tau_{i,j},\iota}))$ for $i,j\in \{1,2\}$ and $\iota\in I(\tau_{i,j},w)$.
\end{clproof}

For future reference, we state the key property of profiles: having the same profile implies having approximately same distances to the profiles with indices in $I$.

\begin{claim}\label{cl:profiles-apx}
Suppose $w$ and $w'$ are two vertices of $\Diam(e)$ that have the same profile.
Then for each $\tau\in \{\tau_{i,j}\colon i,j\in \{1,2\}\}$ and $\iota\in I(\tau,w)$, we have
\begin{equation*}\dist(w',\port{\tau,\iota})\leq (1+\eps)\cdot \dist(w,\port{\tau,\iota})+1.\end{equation*}
\end{claim}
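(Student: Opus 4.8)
The plan is to unwind the definition of ``profile'' and reduce the whole statement to one elementary inequality about the function $\lvl$.

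First I would record what having the same profile buys us. If $w$ and $w'$ have the same profile in $\Diam(e)$, then for each incidence $\tau\in\{\tau_{i,j}\colon i,j\in\{1,2\}\}$ the two profiles agree on the minimum index from item~1, that is, $\frth(\tau,w)=\frth(\tau,w')$. The one point that deserves care here is that the index set $I(\tau,\cdot)$ is \emph{not} recorded directly: it is \emph{derived} from $\frth(\tau,\cdot)$ together with the quantity $\ell(\tau)=\lvl(d(\tau))$, which depends only on the incidence $\tau$ (hence on the diamond $\Diam(e)$, which $w$ and $w'$ share) and not on the vertex. Therefore $\frth(\tau,w)=\frth(\tau,w')$ already forces $I(\tau,w)=I(\tau,w')=:I$, so the two profiles record a level for exactly the same collection of portals $\port{\tau,\iota}$, $\iota\in I$; matching the two records portal by portal then yields $\lvl(\dist(w,\port{\tau,\iota}))=\lvl(\dist(w',\port{\tau,\iota}))$ for every $\tau$ and every $\iota\in I(\tau,w)$.

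Second, I would isolate the following elementary fact about $\lvl$: if $\lvl(a)=\lvl(b)=\ell$, then $b\leq(1+\eps)a+1$. For $\ell=0$ this holds because then $a,b<1$, so $b<1\leq(1+\eps)a+1$; for $\ell\geq1$ it holds because $a\geq(1+\eps)^{\ell-1}$ while $b<(1+\eps)^{\ell}$, so $b<(1+\eps)^{\ell}\leq(1+\eps)a$. (In the latter case one in fact gets the sharper $b\leq(1+\eps)a$, and the additive $+1$ is only there to absorb the lowest distance level, where it is essentially tight, e.g.\ when a portal coincides with $w$.)

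Finally I would just combine the two steps: for each $\tau\in\{\tau_{i,j}\colon i,j\in\{1,2\}\}$ and each $\iota\in I(\tau,w)$, apply the elementary fact with $a=\dist(w,\port{\tau,\iota})$ and $b=\dist(w',\port{\tau,\iota})$, using the equality of levels from the first step, to obtain $\dist(w',\port{\tau,\iota})\leq(1+\eps)\dist(w,\port{\tau,\iota})+1$, which is exactly the claim. I do not expect a genuine obstacle; the only spot that needs a moment of attention is the bookkeeping in the first step — namely that ``same profile'' must be understood as also pinning down the set of portals for which a level is stored, so that the stored levels may legitimately be compared one against the other.
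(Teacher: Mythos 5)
Your proof is correct and takes essentially the same route as the paper's: both arguments reduce the claim to the observation that two distances sharing the same level $\ell\geq 1$ lie in the common interval $[(1+\eps)^{\ell-1},(1+\eps)^{\ell})$ and hence differ by at most a multiplicative factor $(1+\eps)$, with the additive $+1$ covering the level-$0$ case. Your extra bookkeeping that equal profiles force $\frth(\tau,w)=\frth(\tau,w')$ and hence $I(\tau,w)=I(\tau,w')$ is left implicit in the paper but is a correct (and welcome) clarification.
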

\begin{clproof}
Let $\ell=\lvl(\dist(w,\port{\tau,\iota}))=\lvl(\dist(w',\port{\tau,\iota}))$, as recorded in the common profile.
If $\ell=0$, then $\dist(w',\port{\tau,\iota})<1$ and we are done.
Otherwise, $\dist(w,\port{\tau,\iota})$ and $\dist(w',\port{\tau,\iota})$ are both contained in the interval $[(1+\eps)^{\ell-1},(1+\eps)^{\ell})$.
This interval has length $\eps\cdot (1+\eps)^{\ell-1}\leq \eps \cdot \dist(w,\port{\tau,\iota})$, hence the claim follows.
\end{clproof}

\subparagraph*{Construction of the set $\fac_0$.}
We now construct the set $\fac_0$ as follows: for every diamond $\Diam(e)$ and every
possible profile in $\Diam(e)$, include in $\fac_0$ one facility with that profile (if one exists).
Since there are $\Oh(|S|)$ diamonds, by Claim~\ref{cl:profiles-number} and~\eqref{cl:profiles-number} we have
\begin{equation*}|\fac_0|\leq \Oh(|S|)\cdot \maxlvl^{\Oh(\eps^{-3})}=|\supp(\wei)|\cdot (\varepsilon^{-1} \log n)^{\Oh(\eps^{-3})},\end{equation*}
as claimed. It remains to prove that $\fac_0$ has the claimed approximation properties.

For every facility $w \in \fac$, pick a diamond $\Diam(e)$ containing $w$ and let $f(w)$
to be the facility $f(w) \in \fac_0 \cap \Diam(e)$ that has the same profile as $w$.
Fix a solution $\optfac \subseteq \fac$ with $|\optfac|\leq k$ minimizing $\conncost(\optfac, \wei)$.
Let $\openfac_0 = \{f(w)\colon w \in \optfac\}$. Clearly, $|\openfac_0| \leq |\optfac| \leq k$.
To finish the proof of Lemma~\ref{lem:kmedian} it suffices to show that 
\begin{equation}\label{eq:fpas:1}
\conncost(\openfac_0, \wei) \leq (1+ \Oh(\varepsilon))\conncost(\optfac, \wei).
\end{equation}

To this end, consider any client $v \in S = \supp(\wei)$ 
and let $w \in \optfac$ be the facility in $\optfac$ serving $v$, that is, $\dist(v, w) = \dist(v, \optfac)$.
To show~\eqref{eq:fpas:1}, it suffices to prove that
\begin{equation}\label{eq:fpas:2}
\dist(v, f(w)) \leq (1+\Oh(\varepsilon)) \dist(v, w) + \Oh(1).
\end{equation}
Indeed, by summing \eqref{eq:fpas:2} through all $v\in S$ and using~\eqref{eq:conncost-apx} we obtain
\begin{eqnarray*}
\conncost(\openfac_0, \wei) & \leq & (1+\Oh(\varepsilon)) \conncost(\optfac, \wei) + \Oh(1)\cdot |S|\\
                            & \leq & (1+\Oh(\varepsilon)) \conncost(\optfac, \wei) + \Oh(\eps)\cdot \conncost(\apxfac, \wei)\leq (1+\Oh(\varepsilon)) \conncost(\optfac, \wei),
\end{eqnarray*}
where the last inequality is due to $\apxfac$ being an $\Oh(1)$-approximate solution.

Hence, from now on we focus on proving~\eqref{eq:fpas:2}.
Let $\Diam(e)$ be the diamond containing $w$ and $f(w)$.
Consider the shortest path $P$ from $w$ to $v$ in $G$.
By Lemma~\ref{lem:perimeter-intersects}, the path $P$ intersects the perimeter of the diamond $\Diam(e)$.
Let $u$ be the vertex on the perimeter of $\Diam(e)$ that lies on $P$ and, among such, is closest to $w$ on $P$. 
Since $P$ is a shortest path, the length of the subpaths of $P$ between $v$ and $u$ and between $u$ and $w$ equal $\dist(v,u)$ and $\dist(u, w)$, respectively,
and in particular $\dist(v,w)=\dist(v,u)+\dist(u,w)$.

We now observe that to prove~\eqref{eq:fpas:2}, it suffices show the following.
\begin{equation}\label{eq:fpas:3}
\dist(u, f(w)) \leq \dist(u, w) + \Oh(\varepsilon) \dist(v, w) + \Oh(1).
\end{equation}
Indeed, assuming~\eqref{eq:fpas:3} we have
\begin{eqnarray*}
\dist(v, f(w)) & \leq & \dist(v, u) + \dist(u, f(w))\\
               & \leq & \dist(v, u) + \dist(u, w) + \Oh(\varepsilon) \dist(v, w) + \Oh(1)\\
               & =    & \dist(v, w) + \Oh(\varepsilon) \dist(v, w) + \Oh(1).
\end{eqnarray*}
Hence, from now on we focus on proving~\eqref{eq:fpas:3}.

Let $\tau_{i,j}$ for $i,j\in \{1,2\}$, be the four incidences involved in the diamond $\Diam(e)$.
Since $u$ lies on the perimeter of $\Diam(e)$, actually $u$ lies on $\Spoke(\tau)$, where $\tau=\tau_{i,j}$ for some $i,j\in \{1,2\}$.
Let $p = \port{\tau, 0}$ be the vertex of $S$ involved in the incidence $\tau$.
Since $u \in \VorPrt_S(p)$ while $v \in S$, we have that $\dist(u, p) \leq \dist(u, v)$. 
Consequently, we have $\dist(u,w)\leq \dist(v,w)$ and $\dist(u,p)\leq \dist(v, w)$, so to prove~\eqref{eq:fpas:3} it suffices to prove the following:
\begin{equation}\label{eq:fpas:4}
\dist(u, f(w)) \leq \dist(u, w) + \Oh(\eps)(\dist(u,w) + \dist(u, p)) + \Oh(1).
\end{equation}

Let $\portal_u=\port{\tau, \iota}$ be the portal on the subpath of $\Spoke(\tau)$ between $u$ and $p$ that is closest to $u$.
Intuitively, $\portal_u$ is a good approximation of $u$ and distances from $\portal_u$ are almost the same as distances from $u$.
As this idea will be repeatedly used in this sequel, we encapsulate it in a single claim.

\begin{claim}\label{cl:apx-portal}
Suppose for some vertices $x$ and $y$ we have
\begin{equation*}\dist(\portal_u,x)\leq A\cdot \dist(\portal_u,y)+B,\end{equation*}
for some $A,B$. Then
\begin{equation*}\dist(u,x)\leq A\cdot \dist(u,y)+B+(A+1)\eps\cdot \dist(u,p)+(A+1).\end{equation*}
\end{claim}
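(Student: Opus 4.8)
The statement is a triangle-inequality bookkeeping lemma: it says that if $x$ is close to $y$ as measured from $\portal_u$, then $x$ is close to $y$ as measured from $u$, up to the additive error incurred by the discrepancy $\dist(u,\portal_u)$. So the plan is to control $\dist(u,\portal_u)$ and then push two triangle inequalities through.

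\medskip

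\textbf{Step 1: bound $\dist(u,\portal_u)$.} Recall $\portal_u=\port{\tau,\iota}$ is the portal on the $u$-to-$p$ subpath of $\Spoke(\tau)$ closest to $u$. Portals on $\Spoke(\tau)$ sit at distances $0,1,(1+\eps),(1+\eps)^2,\dots$ from $p$, so consecutive portals in this geometric regime are at distance $\eps\cdot(1+\eps)^{\iota-1}$ apart, and $(1+\eps)^{\iota-1}\le\dist(p,\portal_u)\le\dist(p,u)$ (using $\dist(p,\portal_u)\le\dist(p,u)$ since $\portal_u$ lies between $p$ and $u$ on the spoke, a shortest path). Hence the gap between $\portal_u$ and the next portal towards $u$ is at most $\eps\cdot\dist(u,p)$, and since $\portal_u$ is the closest portal to $u$ on that subpath, the subpath distance from $u$ to $\portal_u$ is at most this gap — or, if $\portal_u=\port{\tau,1}$ so that the next portal towards $p$ is $p$ itself, the distance is at most $1$. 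Either way,
\begin{equation*}
\dist(u,\portal_u)\le \eps\cdot\dist(u,p)+1.
\end{equation*}

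\medskip

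\textbf{Step 2: transfer the hypothesis from $\portal_u$ to $u$.} Write $\delta:=\dist(u,\portal_u)$. By the triangle inequality,
\begin{equation*}
\dist(\portal_u,x)\ge \dist(u,x)-\delta
\qquad\text{and}\qquad
\dist(\portal_u,y)\le \dist(u,y)+\delta.
\end{equation*}
Plugging these into the hypothesis $\dist(\portal_u,x)\le A\cdot\dist(\portal_u,y)+B$ gives
\begin{equation*}
\dist(u,x)-\delta \le A\bigl(\dist(u,y)+\delta\bigr)+B,
\end{equation*}
i.e. $\dist(u,x)\le A\cdot\dist(u,y)+B+(A+1)\delta$. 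Substituting the bound $\delta\le\eps\cdot\dist(u,p)+1$ from Step~1 yields
\begin{equation*}
\dist(u,x)\le A\cdot\dist(u,y)+B+(A+1)\eps\cdot\dist(u,p)+(A+1),
\end{equation*}
which is exactly the claimed inequality.

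\medskip

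\textbf{Main obstacle.} There is no deep obstacle here; the only thing needing care is the edge case in Step~1 when $\portal_u$ is the portal $\port{\tau,1}$ at distance $1$ from $p$ (so that "the next portal towards $p$" is the degenerate portal $\port{\tau,0}=p$, not a geometric-series portal), or when $\ell(\tau)=0$ and the only portals are $p$ itself — in these boundary situations the $+1$ additive term in the bound on $\dist(u,\portal_u)$ is what saves us, rather than the multiplicative $\eps\cdot\dist(u,p)$ term. One should also double-check the inequality $\dist(p,\portal_u)\le\dist(p,u)$: this holds because $\portal_u$ lies on $\Spoke(\tau)$, which is a shortest path from $p$ to the endpoint of the incidence, between $p$ and $u$ — so the prefix to $\portal_u$ is shorter than (or equal to) the prefix to $u$. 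Everything else is a mechanical two-line triangle-inequality computation.
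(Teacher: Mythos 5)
Your proof is correct and takes essentially the same route as the paper: first bound $\dist(u,\portal_u)\leq \eps\cdot\dist(u,p)+1$ using the geometric spacing of portals and the choice of $\portal_u$ as the closest portal on the $u$--$p$ subpath, then push the hypothesis through two triangle inequalities to collect the factor $(A+1)$ on the error term. (A minor quibble: the degenerate situation that the additive $+1$ really covers is $\portal_u=\port{\tau,0}=p$, i.e.\ $\dist(u,p)<1$, rather than $\portal_u=\port{\tau,1}$, but your stated bound on $\dist(u,\portal_u)$ is exactly the paper's intermediate inequality, so nothing is lost.)
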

\begin{clproof}
By the choice of $\portal_u$ we have 
\begin{equation*}\dist(\portal_u,p)\leq \dist(u,p)\leq (1+\eps)\dist(\portal_u,p)+1,\end{equation*}
so $\dist(u,\portal_u)\leq \eps\cdot \dist(u,p)+1$. Therefore, we have
\begin{eqnarray*}
\dist(u,x) & \leq & \dist(\portal_u,x)+\dist(\portal_u,u) \leq A\cdot \dist(\portal_u,y) + B + \dist(\portal_u,u)\\
           & \leq & A\cdot (\dist(u,y)+\dist(\portal_u,u)) + B + \dist(\portal_u,u)\\
           & =    & A\cdot \dist(u,y) + B + (A+1)\cdot \dist(\portal_u,u)\\
           & \leq & A\cdot \dist(u,y) + B + (A+1)\cdot \eps\cdot \dist(u,p) + (A+1),
\end{eqnarray*}
as claimed.
\end{clproof}


Since $w$ and $f(w)$ have the same profile, we may denote $\frth=\frth(\tau,w)=\frth(\tau,f(w))$ and $I=I(\tau,w)=I(\tau,f(w))$.
Further, let $\portal_\frth=\port{\tau,\frth}$
We now consider a number of cases depending on the relative values of $\iota$ and $\frth$, 
with the goal on proving that~\eqref{eq:fpas:4} holds in each case. See Figure~\ref{fig:vupw} for an illustration.

\begin{figure}[h]
\begin{center}
\includegraphics{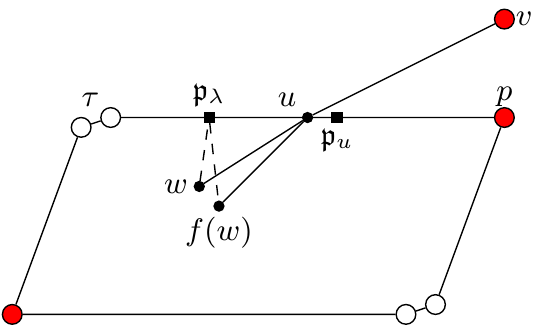}
\end{center}
\caption{The diamond $\Diam(e)$ with vertices $v$, $u$, $w$, $p$, and $f(w)$.
  Red vertices are clients, black squares are portals.
  The case distinction in the proof corresponds to relative order of $\mathfrak{p}_\lambda$ and $\mathfrak{p}_\iota$.}\label{fig:vupw}
\end{figure}

\subparagraph*{Middle case: $\iota\in I$.}
As profiles of $w$ and $f(w)$ are the same and $\iota\in I$, by Claim~\ref{cl:profiles-apx} we have
\begin{equation*}\dist(\portal_u, f(w))\leq (1+\varepsilon)\dist(\portal_u, w)+1.\end{equation*}
It suffices now to apply Claim~\ref{cl:apx-portal} to infer inequality~\eqref{eq:fpas:4}.

\subparagraph*{Close case: $\iota < \frth - 1/\varepsilon^3$.}
Since $\iota < \frth - 1/\varepsilon^3$, by the choice of $\frth$ we have
\begin{equation*}\dist(\portal_u, p) \leq \varepsilon\cdot \dist(\portal_u, w).\end{equation*}
By applying Claim~\ref{cl:apx-portal}, we infer that
\begin{equation*}\dist(u, p) \leq \eps\cdot\dist(u, w) + (1+\eps)\eps\cdot \dist(u,p) + \Oh(1)\leq \eps\cdot \dist(u, w) + \dist(u,p)/2 + \Oh(1),\end{equation*}
which entails
\begin{equation*}\dist(u, p) \leq 2\eps\cdot \dist(u, w)+\Oh(1).\end{equation*}
Since $0\in I$ and the profiles of $w$ and $f(w)$ are equal, by Claim~\ref{cl:profiles-apx} we have
\begin{equation*}\dist(p, f(w)) \leq (1+\varepsilon)\cdot \dist(p, w)+1.\end{equation*}
By combining the last two inequalities, we conclude that
\begin{eqnarray*}
\dist(u, f(w)) & \leq & \dist(u, p) + \dist(p, f(w)) \\
    & \leq & \dist(u, p) + (1 + \eps) \dist(p, w) + 1 \\
    & \leq & \dist(u, p) + (1 + \eps) (\dist(u,p)+\dist(u, w)) +1 \\
    & =    & (2 + \eps) \dist(u, p) + (1+\eps) \dist(u, w) + 1\\
    & \leq & (1 + \Oh(\eps)) \dist(u, w) + \Oh(1).
\end{eqnarray*}
Thus, inequality~\eqref{eq:fpas:4} holds in this case.

\subparagraph*{Far case: $\iota > \frth + 1/\varepsilon^3$.}
By the definition of $\frth$ and since $\iota > \frth + 1/\varepsilon^3$, we have in particular $\frth < \ell(\tau)$ and hence
\begin{equation*}\eps\cdot \dist(\portal_\frth, w) < \dist(\portal_\frth, p).\end{equation*}
On the other hand, we have
\begin{equation*}\dist(p, \portal_\frth) = (1+\eps)^{\frth-\iota} \dist(p, \portal_u)\leq (1+\eps)^{-1/\eps^3}\dist(p, \portal_u)\leq \eps^2\cdot \dist(p, \portal_u),\end{equation*}
where the last step follows from Bernoulli's inequality. By combining the above two inequalities we obtain
\begin{equation*}\dist(\portal_\frth, w) \leq \eps\cdot \dist(p, \portal_u),\end{equation*}
implying
\begin{equation*}\dist(p,w)\leq \dist(p,\portal_\frth)+\dist(\portal_\frth, w)\leq (\eps+\eps^2)\cdot \dist(p, \portal_u)\leq 2\eps\cdot \dist(p, \portal_u)\leq 2\eps\cdot \dist(p,u).\end{equation*}
As before, by Claim~\ref{cl:profiles-apx} we have $\dist(p,f(w))\leq (1+\eps)\dist(p,w)+1$ due to $0\in I$, hence
\begin{eqnarray*}
\dist(u, f(w)) & \leq & \dist(u, p) + \dist(p, f(w)) \\
    & \leq & (\dist(u, w) + \dist(p, w)) + (1 + \eps) \dist(p, w) + 1 \\
    & \leq &  \dist(u, w) + (2+\eps)\cdot (2\eps)\cdot \dist(u,p) + 1 \\
    & \leq & \dist(u, w) + \Oh(\varepsilon) \dist(u, p) + \Oh(1).
\end{eqnarray*}
We conclude that inequality~\eqref{eq:fpas:4} holds in this case.

\medskip

As the case investigation presented above covers all the possibilities, the proof of Lemma~\ref{lem:kmedian} is complete, so we have also proved Theorem~\ref{thm:fac-coreset}.



\section{Bicriteria EPTAS for \textsc{$k$-Median} in planar graphs}
\label{sec:kmedian-bi}

With all the tools prepared, we proceed to the proof of Theorem~\ref{thm:kmedian-bi}.
Let $(G,\clients,\fac,k)$ be an input \textsc{$k$-Median} instance. 
As before, we modify $\inst$ by triangulating each face with infinity-cost edges and slightly perturbing the edge lengths so that the shortest paths
are unique and finite distances are pairwise different. 
As we can rescale $\eps$ at the end, we aim at a solution consisting of $(1+\Oh(\eps))k$ facilities and 
having connection cost at most $(1+\Oh(\eps))$ times larger than the minimum possible connection cost of a solution of size $k$.

\subparagraph*{Step 1: Compute a facility coreset and contract the graph.}
The first step of the algorithm is to apply Theorem~\ref{thm:fac-coreset} to $(G,\clients,\fac,k)$ and $\eps$, 
thus obtaining a subset of facilities $\fac_0\subseteq \fac$ of size $k\cdot (\log n/\eps)^{\Oh(1/\eps^3)}$
that contains a $(1+\eps)$-approximate solution.

We next compute the Voronoi partition $\{\VorPrt_{\fac_0}(p)\}_{p\in \fac_0}$ induced by $\fac_0$ in $G$.
Define the {\em{contracted graph}} $H$ as the graph on the vertex set $\fac_0$ where two vertices $p,q\in \fac_0$ are adjacent if and only if in $G$ there is an edge with one endpoint in 
$\VorPrt_{\fac_0}(p)$ and second in $\VorPrt_{\fac_0}(q)$.
Note that $H$ is unweighted and connected.
Since Voronoi cells induce connected subgraphs of $G$, the graph $H$ can be obtained from $G$ by contracting the whole cell $\VorPrt_{\fac_0}(p)$ onto $p$, for each $p\in \fac_0$. This implies that $H$ is planar.

\subparagraph*{Step 2: Compute an $r$-division and solve the regions.}
Set
\begin{equation*}r = (\eps^{-1} |\fac_0|/k)^2 \in (\eps^{-1}\log n)^{\Oh(\varepsilon^{-3})}\end{equation*}
and apply Lemma~\ref{lem:division} to compute in polynomial time an $r$-division $\Rr$ of $H$ satisfying
\begin{equation}\label{eq:boundaries}
\sum_{R\in \Rr} |\prt R|\leq \Oh(\eps)\cdot k.
\end{equation}
We now partition the client set $\clients$ into sets $(\clients_R)_{R\in \Rr}$ as follows:
take every client $v\in \clients$ and letting $p\in \fac_0$ be such that $v\in \VorPrt_{\fac_0}(p)$
assign $v$ to the set $\clients_R$ for any region $R\in \Rr$ such that $p\in V(R)$. 
Note that for clients residing in cells that are not on the boundary of any region there is exactly one choice for such a region $R$, whereas
clients from boundary cells have more than one option: we choose an arbitrary one so that every client is assigned to exactly one set $\clients_R$.

Now for each $R\in \Rr$ and $\ell\in \{0,1,\ldots,k\}$ construct an instance $\Ii(R,\ell)=(G,\clients_R,V(R),\ell+|\prt R|)$ of {\sc{$k$-Median}}.
For each such instance, apply the algorithm of Theorem~\ref{thm:kmedian-ptas} with the set of compulsory facilities $\prt R$.
This yields a solution $\openfac(R,\ell)\subseteq V(R)$ that contains $\prt R$ and at most $\ell$ other facilities of $V(R)\setminus \prt R$, 
which has connection cost at most $(1+\eps)$ times the optimum connection cost of a solution satisfying these conditions.
Let $\cost(R,\ell)=\conncost(\openfac(R,\ell),\clients_R)$.
Observe that $|R\setminus \prt R|\leq r\leq (\eps^{-1}\log n)^{\Oh(\varepsilon^{-3})}$, hence by Theorem~\ref{thm:kmedian-ptas} the running time needed for solving each individual instance $\Ii(R,\ell)$ is
\begin{equation*}(\eps^{-1}\log n)^{\Oh(\varepsilon^{-3})\cdot \Oh(\eps^{-2})}\cdot n^{\Oh(1)}\leq 2^{\Oh(\eps^{-5}\log(\varepsilon^{-1}))}\cdot n^{\Oh(1)},\end{equation*}
where again we use the fact that $(\log n)^d\leq 2^{\Oh(d\log d)}\cdot n^{\Oh(1)}$. 
There are at most $n^2$ instances $\Ii(R,\ell)$ to solve, hence the total running time spent is again $2^{\Oh(\eps^{-5}\log(\varepsilon^{-1}))}\cdot n^{\Oh(1)}$.

\subparagraph*{Step 3: Assemble the regional solutions.}
Finally, consider the following problem: find a vector $(\ell_R)_{R\in \Rr}$ with $\sum_{R\in \Rr} \ell_R\leq k$ that minimizes $\sum_{R\in \Rr} \cost(R,\ell_R)$.
It is straightforward to see that this problem can be solved in polynomial time by a standard knapsack dynamic programming as follows: order the regions arbitrarily and
iterate through the order while keeping a dynamic programming table that for each $k'\in \{0,1,\ldots,k\}$ keeps the minimum possible cost that can be obtained among regions $R$ scanned so far 
with $\ell_R$s summing up to~$k'$.

Having computed an optimum solution $(\ell_R)_{R\in \Rr}$ to the problem above, construct the output solution
\begin{equation*}\openfac = \bigcup_{R\in \Rr} \openfac(R,\ell_R).\end{equation*}
Observe that by~\eqref{eq:boundaries} we have
\begin{equation*}|\openfac|\leq\sum_{R\in \Rr} |\openfac(R,\ell_R)|= \sum_{R\in \Rr} |\prt R| + \sum_{R\in \Rr} \ell_R\leq \Oh(\eps)\cdot k + k = (1+\Oh(\eps))k,\end{equation*}
as required. Along the description we argued that the running time of the algorithm is $2^{\Oh(\eps^{-5}\log(\eps^{-1}))}\cdot n^{\Oh(1)}$.
It remains to bound the connection cost of $\openfac$.

\subparagraph*{Approximation factor.}
Let $\optfac$ be an optimum solution to $(G,\clients,\fac,k)$.
By the properties of $\fac_0$ asserted by Theorem~\ref{thm:fac-coreset}, there exists a solution
$\optfac_0 \subseteq \fac_0$ with $|\optfac_0|\leq k$ satisfying
\begin{equation}\label{eq:core}
\conncost(\optfac_0,\clients)\leq (1+\eps)\cdot \conncost(\optfac,\clients).
\end{equation}
Let $\optfac_1 = \optfac_0\cup \bigcup_{R\in \Rr}\prt R$.
Clearly
\begin{equation}\label{eq:01}
\conncost(\optfac_1,\clients)\leq \conncost(\optfac_0,\clients),
\end{equation}
because $\optfac_1\supseteq \optfac_0$.
The next claim is crucial: due to buying all the facilities in the boundaries of all the regions, in $\optfac_1$ the client-facility connections are isolated into regions.

\begin{claim}\label{cl:kmedian:sep}
For every $R \in \Rr$ and every client $v\in \clients_R$, 
the facility of $\optfac_1$ that is the closest to $v$ belongs to $V(R)$.
\end{claim}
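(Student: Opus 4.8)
The plan is to argue by contradiction using the defining property of the contracted graph $H$ and the fact that shortest paths in $G$ are connected through Voronoi cells. Suppose $v\in\clients_R$ but the closest facility $g\in\optfac_1$ to $v$ lies in $V(R')$ for some region $R'\neq R$. Let $p\in\fac_0$ be such that $v\in\VorPrt_{\fac_0}(p)$; by construction of the partition $(\clients_R)_{R\in\Rr}$ we have $p\in V(R)$. First I would take the shortest path $P$ in $G$ from $v$ to $g$ and consider the sequence of Voronoi cells $\VorPrt_{\fac_0}(\cdot)$ that $P$ visits, in order. Since $p\in V(R)$, $g\in V(R')\setminus V(R)$ (or at least $g$'s cell center is not in $V(R)$), and consecutive cells along $P$ correspond to adjacent vertices of $H$, the walk traced in $H$ starts inside $V(R)$ and ends outside it; hence it must pass through a vertex that lies on $\prt R$, say the cell of some $q\in\prt R$.

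The key step is then the observation that $q\in\prt R\subseteq\optfac_1$, and that the prefix of $P$ from $v$ up to the point where it enters $\VorPrt_{\fac_0}(q)$ is no longer than $P$ itself; in fact, since every vertex of $\VorPrt_{\fac_0}(q)$ is at distance at most $\dist(v,\cdot)$ along that prefix, we get $\dist(v,q)\le\dist(v,g)$. If this inequality is strict we already contradict that $g$ is the closest facility of $\optfac_1$ to $v$. The only subtlety is the boundary case where the path $P$ never actually leaves the cells assignable to $R$ — but then $g$'s cell center would be a vertex of $H$ reachable from $p$ within $V(R)$ without crossing $\prt R$, which, combined with how the $r$-division regions partition edges of $H$, forces $g\in V(R)$, contradicting $g\in V(R')\setminus V(R)$. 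I would phrase this cleanly by noting that $\{V(R):R\in\Rr\}$ covers $V(H)$ and every $H$-edge lies in a unique region, so any $H$-walk from a vertex of $V(R)$ to a vertex outside $V(R)$ meets $\prt R$.

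Concretely the steps are: (1) fix $v\in\clients_R$, let $p$ be its Voronoi center, recall $p\in V(R)$; (2) take any $g\in\optfac_1$ and the shortest $v$–$g$ path $P$ in $G$; (3) read off the cell sequence of $P$ as a walk $W$ in $H$ from $p$ to the center of $g$'s cell; (4) if this walk stays within $V(R)$ then $g\in V(R)$ and we are done, since $\optfac_1\cap V(R)\neq\emptyset$ already (e.g. $p$ or $g$ itself); (5) otherwise $W$ crosses some $q\in\prt R$, and the subpath of $P$ from $v$ to the first vertex of $\VorPrt_{\fac_0}(q)$ witnesses $\dist(v,q)\le\dist(v,g)$ with $q\in\optfac_1$; taking the closest such $g$ and exploiting that distances in $G$ are pairwise distinct upgrades this to the statement that the $\optfac_1$-closest facility to $v$ has its Voronoi center in $V(R)$, which by planarity/connectivity of cells means the facility itself lies in $V(R)$.

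I expect the main obstacle to be handling the interplay between "the facility $g$ is in $V(R)$" and "the Voronoi center of the cell containing (the approach of) $P$ is in $V(R)$": one has to be careful that a facility $g\in\fac_0$ is itself the center of its own cell $\VorPrt_{\fac_0}(g)$, so membership of $g$ in $V(R)$ is exactly membership of that cell's center in $V(R)$ as a vertex of $H$. Once that identification is made explicit, the argument is a short connectivity/adjacency chase in $H$ together with the triangle-inequality estimate $\dist(v,q)\le\dist(v,g)$ along the shortest path $P$; no heavy computation is involved. A minor point to get right is why the inequality can be taken strict (or why equality cannot stand), which follows from the pairwise-distinctness of finite distances assumed in Section~\ref{sec:voronoi}.
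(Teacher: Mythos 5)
Your proposal is correct and takes essentially the same route as the paper's proof: assuming the closest facility $g\in\optfac_1$ lies outside $V(R)$, the shortest $v$--$g$ path must traverse a vertex $u\in\VorPrt_{\fac_0}(q)$ for some $q\in\prt R$ (because the induced walk in $H$ leaves $V(R)$ and every $H$-edge lies in one region), and then the Voronoi property gives $\dist(v,q)\le\dist(v,u)+\dist(u,q)<\dist(v,u)+\dist(u,g)=\dist(v,g)$, contradicting the choice of $g$. The only differences are cosmetic: you make explicit the $H$-walk/boundary-crossing step that the paper compresses into one sentence, and the somewhat garbled phrase about ``distance at most $\dist(v,\cdot)$ along that prefix'' should simply be replaced by the displayed chain of inequalities above, which also settles the strictness you worried about.
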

\begin{clproof}
Let $p \in \optfac_0$ be the facility closest to $v$ and, assuming the contrary, suppose $p\notin V(R)$.
Let $P$ be the shortest path from $v$ to $p$ in $G$.
By the construction of $H$ and the properties of division $\Rr$, supposition $p\notin V(R)$ entails that
$P$ traverses a vertex $u$ that belongs to $\VorPrt_{\fac_0}(q)$ for some vertex $q \in \prt R$. 
However, as the Voronoi partition is defined with respect to $\fac_0$
   and $p,q \in \fac_0$, we have that $\dist(u, q) < \dist(u, p)$, implying $\dist(v,q) < \dist(v,p)$.
This is a contradiction to the choice of $p$.
\end{clproof}

For each $R\in \Rr$, let $\optfac_R = \optfac_1\cap V(R)$ and let $\ell^\star_R=|\optfac_R\setminus \prt R|$. Note that $\prt R\subseteq \optfac_R$.
By the approximation guarantee of the algorithm of Theorem~\ref{thm:kmedian-ptas}, we have
\begin{equation}\label{eq:region-apx}
\cost(R,\ell^\star_R)\leq (1+\eps)\cdot\conncost(\optfac_R,\clients_R)\qquad\textrm{for each }R\in \Rr.
\end{equation}
As $\{\clients_R\}_{R\in \Rr}$ is a partition of $\clients$, by Claim~\ref{cl:kmedian:sep} we infer that
\begin{equation}\label{eq:region-sum}
\conncost(\optfac_1,\clients)=\sum_{R\in \Rr}\conncost(\optfac_R,\clients_R).
\end{equation}
Note that sets $\optfac_R\setminus \prt R$ for $R\in \Rr$ are pairwise disjoint and contained in $\optfac_0$, hence $\sum_{R\in \Rr} \ell^\star_R\leq |\optfac_0|\leq k$.
As $(\ell_R)_{R\in \Rr}$ was an optimum solution of the final knapsack problem, we have
\begin{equation}\label{eq:knapsack}
\sum_{R\in \Rr} \cost(R,\ell_R)\leq \sum_{R\in \Rr} \cost(R,\ell^\star_R).
\end{equation}
Finally, the construction of $\openfac$ immediately yields
\begin{equation}\label{eq:openfac}
\conncost(\openfac,\clients)\leq \sum_{R\in \Rr} \cost(R,\ell_R).
\end{equation}
By combining~\eqref{eq:core},~\eqref{eq:01},~\eqref{eq:region-apx},~\eqref{eq:region-sum},~\eqref{eq:knapsack}, and~\eqref{eq:openfac} we conclude that
\begin{equation*}\conncost(\openfac,\clients)\leq (1+\eps)^2\cdot \conncost(\optfac,\clients)\leq (1+\Oh(\eps))\cdot \conncost(\optfac,\clients).\end{equation*}
This finishes the proof.

\bibliography{ref} 

\end{document}